\newbox\ProofSym
\renewenvironment{proof}[1][Proof.]{\O@proof{#1}}{\O@endproof}
\def\O@proof#1{\O@ProofSymtrue\topsep\z@%\@topsepadd\smallskipamount%
  \partopsep\z@%
  \trivlist\@ifstar{\item[]}{\item[\hskip\labelsep\it #1 ]}}
\def\O@endproof{\ifO@ProofSym\relax\hfill\copy\ProofSym\linebreak
  \fi\endtrivlist}
\def\DisplayProofSym{\vskip-\lastskip\vskip-8pt
  \hbox to \hsize{\hfill\copy\ProofSym}\O@ProofSymfalse}
\newif\ifO@ProofSym
\renewcommand{\leq}{\leqslant}
\renewcommand{\geq}{\geqslant}
\title{ {Complexity results for $k$-domination and $\alpha$-domination problems and their variants}}
\author{
Davood Bakhshesh \and Mohammad Farshi\and Mahdieh Hasheminezhad }
\institute{Combinatorial and Geometric Algorithms Lab., Department of Computer Science,
 Yazd University. \email{dbakhshesh@gamil.com, mfarshi@yazd.ac.ir, hasheminezhad@yazd.ac.ir }}
\begin{document}
\maketitle
\begin{abstract}
\nonumber
Let $G=(V, E)$ be a simple and undirected  graph. For some integer $k\geq 1$, a set $D\subseteq V$ {is said to be} a {\it k-dominating set} in $G$ if every vertex $v$ of $G$ outside $D$ has at least $k$ {neighbors} in $D$. Furthermore, for some real number $\alpha$ with $0<\alpha\leq1$, a set $D\subseteq V$ is called {an} {\it $\alpha$-dominating set} in  $G$ if every vertex $v$ of $G$ outside $D$ has at least $\alpha\times d_v$ {neighbors} in $D$, where $d_v$ is the degree of $v$ in $G$. {The cardinality of a minimum $k$-dominating set and  a minimum $\alpha$-dominating set in  $G$ {is said to be} } {the} {\it $k$-domination number} and  {the} {\it $\alpha$-domination number} of $G$, respectively. In this paper, we present some approximability and inapproximability results on  the problem of finding of $k$-domination number and $\alpha$-domination number of some  classes of graphs.  Moreover, we  introduce a generalization of $\alpha$-dominating set {which we call} {{\it \mbox{$f$-dominating set}}}. Given a function  $f:\mathbb{N}\rightarrow \mathbb{R}$, where $\mathbb{N}=\{1, 2, 3, \ldots\}$, a set $D\subseteq V$ {is said to be} an  {$f$-dominating} set in $G$ if  every vertex $v$ of $G$ {outside $D$ has at least $f(d_v)$ neighbors in $D$.} {{We prove} NP-hardness of the problem of finding of a minimum $f$-dominating set in $G$, for a large family of functions $f$.}
\end{abstract}
\textbf{Keywords:}
 $f$-Domination, $\alpha$-domination, $k$-domination, approximation.
\\
\\
\textbf{2010 Mathematics Subject Classification:} 05C69, 68R05, 68Q25.
%% MSC codes here, in the form: \MSC code \sep code
%% or \MSC[2008] code \sep code (2000 is the default)
%\MSC[2010] 00-01\sep  99-00
\section{Introduction}
Let $G=(V, E)$ be an undirected and simple graph. A set $D\subseteq V$ is called {a} {\it dominating} set in $G$ if every vertex of $G$ outside $D$ has at least one neighbor in $D$, or equivalently $|N(v)\cap D|\geq1$, where $N(v)$ is the set of all neighbors of $v$ in  $G$. {The} {cardinality} of a minimum dominating set in $G$ is called {the} {\it domination number} of $G$ denoted by $\gamma(G)$. In  {the} past three decades, wide researches have been done on the domination number of graphs and related problems. For a survey of the area of domination in graphs and {its}  applications we refer the reader to \cite{haynes1998fundamentals1,haynes1998domination}.   In 1985, Fink and Jacobson \cite{fink1985n2,fink1985n1} introduced {the concept of a $k$-dominating set}. Let $k$ be a real number with $k\geq 1$. A set $D\subseteq V$ is called a  {\it $k$-dominating} set in $G$ if for every vertex $v$ outside $D$, $|N(v)\cap D|\geq k$. {The} {cardinality} of a $k$-minimum dominating set is called {the} {\it $k$-domination number} {of $G$ and denoted} by $\gamma_k(G)$. In 2000,  Dunbar  et al.  \cite{dunbar2000alpha},  introduced the concept  of $\alpha$-domination. Let $\alpha$ be a real number with $0<\alpha\leq1$. A set $D\subseteq V$ is called an $\alpha$-dominating set in $G$ if for every vertex $v$ of $G$ outside $D$,  $|N(v)\cap D|\geq \alpha\times d_v$, where $d_v:=|N(v)|$ is the degree of $v$. {The} {cardinality} of a minimum $\alpha$-dominating  set is called {the} {\it $\alpha$-domination number} of $G$ denoted by $\gamma_{\alpha}(G)$.

{In this paper, we show that for any integer $k\geq 1$, the problem of finding a minimum $k$-dominating set in a given graph of maximum degree $k + 2$ is APX-complete (that is, there is no PTAS for the problem unless P = NP).} Furthermore, we present some approximability and inapproximability results on  the problem of finding of $k$-domination number and $\alpha$-domination number of some  classes of graphs. Note that in this paper, we consider  {only  graphs with}   no isolated vertices. We can easily extend the results for the graphs with isolated vertices.

Another interesting problem that we consider in this paper {is that of  approximating} {the} $k$-domination number and $\alpha$-domination number of  $p$-claw free {graphs, which are graphs} without complete bipartite graph $K_{1,p}$ as induced subgraph. We propose an  approximation algorithm  for this problem.   We will show that our approximation algorithm has {an approximation ratio} better than previously known values when the maximum degree of input graph  {satisfies} some special conditions. 
% In 1993, Christoph Stracke and Lutz Volkmann \cite{stracke1993new} introduced {the} concept of {\it $f$-dominating} {sets}.  Suppose that $f$ is an integer-value {function on $V$}. A set $D\subseteq V$ is called an $f$-dominating set in $G$ if for every vertex $v$ outside $D$, $|N(v)\cap D|\geq f_{deg}(v)$. {The cardinality} of a minimum $f$-dominating set in $G$ is called the {\it $f$-domination number} of $G$ {and denoted by $\gamma_f_{deg}(G)$}.  Some researches has been done on $f$-domination \cite{chen1998upper,zhou1996f,  zhou2000inequalities, zhou2014invariants}.      
%Let  $G=(V,E)$ be a graph with vertex set $V:=\{v_1, v_2, \ldots, v_n\}$, and let $D_V=\{d_{v_1},d_{v_2},\ldots, d_{v_n}\}$ be the set of degrees of all vertices in $G$. 

Now consider the definition of  {$\alpha$-domination}. One generalization of this concept is that instead of having at least $\alpha\times d_v$ neighbors in $D$ for each vertex $v\not \in D$, we have at least $f(d_v)$  neighbors in $D$, for some  function $f$. By selecting $f(x)=\alpha x$,  {the definition matches that of $\alpha$-domination.}  Hence,   in this paper, {we define the  notion of an} {{\it $f$-dominating}} {set}. Given a function  $f:\mathbb{N}\rightarrow \mathbb{R}$, where $\mathbb{N}=\{1, 2, 3, \ldots\}$,  a set $D\subseteq V$ is called an  {$f$-dominating} set in $G$ if for every vertex $v$ of $G$ outside $D$, $|N(v)\cap D|\geq f(d_v)$. {The} {cardinality} of a minimum $f$-dominating set in a graph $G$ is called {the} {\it $f$-domination number} of $G$, and the problem of  {finding the}  $f$-domination number of a graph is called  the {\it $f$-domination problem}.   

 %As a motivation of this generalization, we give an application of $f$-dominating set in the area of the computer networks.  A {\it backbone} is a part of computer network that interconnects various pieces of network that  provides a path for exchange information between subnetworks.  From a practical point of view, consider a computer network. A dominating set can be used as a backbone of the network. If part of a network may be out of reach, the dominating set is not useful. But if we use $k$-dominating set instead of dominating set, until $k$ vertices of the network are not out of reach, the $k$-dominating set can be used as a backbone. Now, consider $\alpha$-dominating set as a backbone of a computer network. Actually, until there are a percent of neighbors of each vertex in the network, the $\alpha$-dominating set can be used as a backbone yet. Clearly, $\alpha$-dominating set can be more useful than $k$ -dominating set. Now, assume that we use $f$-dominating set as a backbone of a network. Actually, since in $f$-dominating set the number of neighbors of each vertex inside $f$-dominating set depends on the cardinality of the network, the $f$-dominating set can be more useful than $\alpha$-dominating set. 

{As a motivation of this generalization, we give an application of $f$-dominating set in the
area of the computer networks. A backbone is a part of a computer network that interconnects
various pieces of network that provides a path for exchange of information between subnetworks.
From a practical point of view, designing a backbone which is fault-tolerant is a major problem in designing computer networks.
The fault-tolerance means the networks retain its role, even if some parts (vertices or edges) fail. A dominating set can be used
as a backbone of the network, but a small fault can destroy the dominance of the backbone. If one uses a $k$-dominating set as the backbone, it remains as a dominating set until $k$ neighbors of a vertex in the backbone fails. The value of $k$ is constant and  it does not depend on vertex degrees.  Now, consider an $\alpha$-dominating set as a backbone of a computer network. Actually, the backbone retains its job as far as some portion of neighbors of each vertex in the backbone works. Clearly, using an $\alpha$-dominating set as the backbone is more useful than using $k$-dominating set. Now, assume that
we use $f$-dominating set as a backbone of a network. Actually, since in $f$-dominating set
the number of neighbors of each vertex inside $f$-dominating set is some depends on the value of function $f$ which is more general than the $\alpha$-dominating set,  this gives more flexibility and in some applications, $f$-dominating set for some special functions $f$ may work much better. Since the $f$-dominating set is a generalization of previous definitions of dominating sets (except for vector dominating set; see \cite{cicalese2013approximability}), in any application of dominating sets, one can use $f$-dominating set, instead. }

 %  {In applications of the dominating sets, one might be interested in finding a dominating set $D$ such that instead of  every vertex $v$ outside $D$ is dominated by an  $\alpha d_v$ vertices in $D$, as a linear function of $d_v$,  it is   dominated by a subset of vertices whose cardinality recognizes by a more complicated function of the degree of the vertex such as $\sqrt{d_v}+1$ and $2\ln(1+\frac{d_v}{2})$.}
 {In this paper, we prove that  for a large family of functions $f$, the following problem is NP-hard: given a graph $G$ and a positive integer $k$, decide whether $G$ has an $f$-dominating set $S$ with $|S|\leq k$.}
%%%%%%%%%%%%%%%%%%%%%
%%%%%%%%%%%%%%%%%%%%%%
\section{Approximation hardness and approximability}
In this section, we  discuss   approximation hardness of $k$-domination   and  $\alpha$-domination problems  in the case  {when the input instances are} restricted to {some} subclasses of graphs. For brevity, we  {denote} the $k$-domination problem by {\sc Min} $k$-{\sc Dom Set} and  $\alpha$-domination problem by {\sc Min} $\alpha$-{\sc Dom Set}.{ Moreover, in this section we  propose an  approximation algorithm for approximating of} {\sc Min} $k$-{\sc Dom Set} and {\sc Min} $\alpha$-{\sc Dom Set} for $p$-claw free graphs. % {We encourage the interested reader to consider the problem of $f_{deg}$-domination for other functions $f$ such as $f_{deg}(x)=\sqrt{x}$, $f_{deg}(x)=\ln x$ and etc.}   

Recently,   {the} problems {\sc Min} $k$-{\sc Dom Set},  {\sc Min} $\alpha$-{\sc Dom Set} and {\sc Min Dom Set} (special case of {\sc Min} $k$-{\sc Dom Set}, in which $k=1$)   {were considered} by some researchers \cite{chlebik2008approximation,cicalese2013approximability}.   {In the following, {we summarize the most relevant results, on which our new results will be based}}.
\begin{theorem}[\cite{cicalese2013approximability}]
\label{thm1kapx}
{\sc Min} $k$-{\sc Dom Set }  and {\sc Min} $\alpha$-{\sc Dom Set} can be approximated in polynomial time by a factor of $\ln(2\Delta(G))+1$, where $\Delta(G)$ is the maximum degree of $G$.
\label{kdapp}
\end{theorem}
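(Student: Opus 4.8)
The plan is to realize both {\sc Min} $k$-{\sc Dom Set} and {\sc Min} $\alpha$-{\sc Dom Set} as instances of the \emph{minimum submodular cover} problem and then invoke the classical greedy guarantee of Wolsey: if $f:2^V\to\mathbb{Z}_{\geq 0}$ is monotone and submodular with $f(\emptyset)=0$, then the greedy algorithm that repeatedly adds the vertex of largest marginal gain returns a set $D$ with $f(D)=f(V)$ whose size is within a factor $H(g)$ of optimal, where $g=\max_{u\in V} f(\{u\})$ and $H(m)=\sum_{i=1}^{m}1/i\leq \ln m+1$ is the $m$-th harmonic number. It therefore suffices to exhibit, for each problem, a coverage function $f$ whose maximum value $f(V)$ is attained exactly on the feasible (dominating) sets and whose largest single-element gain $g$ is at most $2\Delta(G)$.

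For {\sc Min} $k$-{\sc Dom Set} we may first assume $k\leq \Delta(G)$: any vertex $v$ with $d_v<k$ can never acquire $k$ neighbors in $D$, so it lies in every $k$-dominating set and can be removed in a preprocessing step. Now set
\[
 f(D)=\sum_{v\in V}\min\bigl(k,\ |N(v)\cap D|+k\cdot[\,v\in D\,]\bigr),
\]
where $[\,v\in D\,]$ is $1$ if $v\in D$ and $0$ otherwise. Each summand is the minimum of the constant $k$ with a nonnegative, monotone, modular function of $D$, hence is monotone and submodular, and so is $f$; clearly $f(\emptyset)=0$. A vertex $v$ contributes its full quota $k$ exactly when $v\in D$ or $|N(v)\cap D|\geq k$, so $f(D)=kn$ (its maximum) if and only if $D$ is a $k$-dominating set. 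Adding a single vertex $u$ to the empty set raises the $u$-term by $k$ and each of the $d_u$ neighbor terms by $1$, whence $f(\{u\})=k+d_u\leq 2\Delta(G)$. Wolsey's bound then yields a ratio at most $H(2\Delta(G))\leq \ln(2\Delta(G))+1$.

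The argument for {\sc Min} $\alpha$-{\sc Dom Set} is identical once the uniform threshold $k$ is replaced by the per-vertex integer threshold $t_v:=\lceil \alpha d_v\rceil$, by taking
\[
 f(D)=\sum_{v\in V}\min\bigl(t_v,\ |N(v)\cap D|+t_v\cdot[\,v\in D\,]\bigr),
\]
so that $f$ reaches its maximum $\sum_v t_v$ precisely on the $\alpha$-dominating sets. The single-element gain at $u$ is now $t_u+d_u=\lceil\alpha d_u\rceil+d_u\leq 2d_u\leq 2\Delta(G)$, using $\alpha\leq 1$, and the same logarithmic guarantee follows.

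The step I expect to require genuine care, rather than routine bookkeeping, is pinning the constant to exactly $\ln(2\Delta(G))+1$ through the parameter $g$: the greedy guarantee is stated as $H(g)$, so one must verify that the worst-case marginal gain is indeed $2\Delta(G)$ and no larger, that submodularity (diminishing returns) makes the empty-set gain $\max_u f(\{u\})$ the global maximum, and that the slack in $H(g)\leq \ln g+1$ suffices. Alongside this sit the edge cases handled by preprocessing \emph{(}forced vertices of degree below $k$ for $k$-domination, and the treatment of $t_v$ for small-degree vertices in the $\alpha$-case\emph{)}, which must be checked to keep the coverage function well defined and its target value correct.
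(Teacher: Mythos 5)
Your argument is correct: the truncated coverage functions you define are monotone and submodular with $f(\emptyset)=0$, their maximum values are attained exactly on the $k$-dominating (resp.\ $\alpha$-dominating) sets, the singleton values are bounded by $k+d_u\leq 2\Delta(G)$ (resp.\ $\lceil\alpha d_u\rceil+d_u\leq 2\Delta(G)$), and Wolsey's greedy guarantee $H(g)\leq\ln g+1$ then yields the claimed ratio. Note that the paper states this theorem as an imported result from \cite{cicalese2013approximability} and gives no proof of its own; your submodular-cover derivation is essentially the standard argument used in that reference, so it matches the intended route.
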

Given a collection $\cal F$ of subsets of $S =\{1,\ldots,n\}$, {\sc Set Cover} is the problem of selecting as few as possible subsets
from $\cal F$ such that their union covers $S$ (see \cite{Feige1998}).  In 2014, Dinur et al. \cite{Dinur2014} showed  the following result. 
\begin{theorem}[\cite{Dinur2014}]
\label{thmdinur}
For any constant $\epsilon>0$ there is no polynomial time algorithm approximating {\sc Set Cover} within a factor of $(1-\epsilon)\ln n$ unless $P=NP$, where $n$ is the cardinality of the ground set.
\end{theorem}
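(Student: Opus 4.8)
The statement is the tight inapproximability threshold for \textsc{Set Cover}, so any proof must ultimately pass through the machinery of probabilistically checkable proofs; I would not attempt an elementary argument. The plan is to exhibit a polynomial-time, gap-preserving reduction from a hard gap version of a two-prover constraint satisfaction problem---a \emph{label cover} (projection games) instance---to \textsc{Set Cover}, converting the completeness/soundness gap of the former into a multiplicative gap of $(1-\epsilon)\ln n$ in the size of the optimal cover.

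First I would fix the starting point. By the PCP theorem together with parallel repetition (or the Moshkovitz--Raz near-linear-size PCP), one obtains, for every small $\delta>0$, label cover instances that are NP-hard to distinguish between the case where every constraint is satisfiable and the case where at most a $\delta$-fraction can be satisfied, \emph{and} whose size is polynomial (indeed near-linear) in the original instance. Keeping the size polynomial is essential: this is exactly what upgrades the conclusion from ``no approximation unless $\mathrm{NP}\not\subseteq\mathrm{DTIME}(n^{\mathrm{polylog}})$'' (Feige's earlier bound) to ``no approximation unless $P=NP$.''

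The combinatorial core is a \emph{partition system}: a ground set equipped with a family of partitions, each into a bounded number of blocks, arranged so that any cover drawn ``consistently'' (one block per partition) is small, whereas any cover that mixes blocks from conflicting partitions is forced to use about $(1-\epsilon)$ times the logarithm of the ground-set size many blocks. I would build this universe by a probabilistic/greedy argument and attach one copy of it to each constraint of the label cover instance. The sets of the \textsc{Set Cover} instance are then indexed by $(\text{vertex},\text{label})$ pairs, with membership arranged so that the two labels at the endpoints of a constraint jointly cover the attached universe precisely when they satisfy the projection. A satisfying labeling yields a cover of size one set per vertex (completeness); if no labeling satisfies more than a $\delta$-fraction of constraints, then any cover must be inconsistent on most constraints and is therefore forced to use $\approx(1-\epsilon)\ln n$ times as many sets (soundness), where $n$ is the size of the constructed ground set.

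The hard part is calibrating all parameters simultaneously: the label-cover soundness $\delta$, the alphabet and arity of the partitions, and the per-constraint universe size must be chosen so that the completeness and soundness bounds differ by a ratio tending to $(1-\epsilon)\ln n$, \emph{while} the whole construction stays polynomial in the original input. Pushing the constant in front of $\ln n$ all the way to $1$ under $P=NP$---rather than under a weaker hypothesis---is precisely the Dinur--Steurer refinement over earlier reductions, and I expect this single step to absorb essentially all of the technical effort.
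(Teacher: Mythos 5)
The paper does not prove this statement at all: it is imported verbatim from the cited reference \cite{Dinur2014} and used purely as a black box to upgrade the hypotheses of the subsequent inapproximability results (replacing the older assumption $\mathrm{NP}\not\subseteq\mathrm{DTIME}(n^{O(\log\log n)})$ by $P\neq NP$ in Theorems \ref{thmepdom}--\ref{thmalpha} and their descendants). So there is no in-paper proof to compare yours against. Your outline is a faithful high-level account of how the result is actually established in the literature: Feige's reduction from gap label cover via partition systems, with the hardness assumption strengthened to $P\neq NP$ by the polynomial-size PCP and parallel-repetition machinery (Moshkovitz--Raz, Dinur--Steurer). You correctly identify both the architecture and the single place where the Dinur--Steurer contribution enters, namely keeping the reduction polynomial while driving the constant to $1-\epsilon$. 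That said, what you have written is a proof plan rather than a proof: the partition system is asserted to exist by ``a probabilistic/greedy argument'' but not constructed, the completeness/soundness accounting is not carried out, and the simultaneous calibration of the label-cover soundness $\delta$, the alphabet, and the per-constraint universe size --- which you yourself flag as absorbing essentially all of the technical effort --- is exactly the content of the theorem and is left undone. For the purposes of this paper that is acceptable, since the authors themselves only ever invoke the statement as an external result; just be aware that nothing in your sketch would substitute for the citation.
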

Now, by applying  Theorem \ref{thmdinur}  to the results of \cite{chlebik2008approximation} we have:
\begin{theorem}
\label{thmepdom}
For any constant $\epsilon>0$ there is no polynomial time algorithm approximating {\sc Min Dom Set} within a factor of $(1-\epsilon)\ln n$ unless $P=NP.$ The same result {holds} for   {bipartite graphs and for split
graphs (hence also for chordal graphs).}
\end{theorem}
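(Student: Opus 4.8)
The plan is to combine the gap-preserving reduction from {\sc Set Cover} to {\sc Min Dom Set} of \cite{chlebik2008approximation} with the optimal {\sc Set Cover} lower bound of Theorem \ref{thmdinur}; the only new ingredient relative to \cite{chlebik2008approximation} is that we feed their reduction the stronger constant $(1-\epsilon)\ln n$ in place of the older $(1-o(1))\ln n$ bound. Recall the reduction. Given a {\sc Set Cover} instance $(S,\mathcal{F})$ with $|S|=n$, build a graph $G$ with one vertex for each element of $S$ and one vertex for each member of $\mathcal{F}$, joining an element-vertex to a set-vertex exactly when the element belongs to that set. Adding all edges among the set-vertices turns the set-side into a clique and makes $G$ a split graph; since every split graph is chordal, the chordal bound follows immediately from the split one, and the bipartite case is handled by the gadget described below.

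First I would establish the exact correspondence $\gamma(G)=\mathrm{opt}$, where $\mathrm{opt}$ is the minimum number of sets covering $S$. Any cover $\mathcal{C}\subseteq\mathcal{F}$ yields a dominating set of the same size: the corresponding set-vertices dominate every element-vertex (each element lies in a chosen set) and, through the clique, every set-vertex. Conversely, any dominating set can be rewritten, without increasing its size, so as to consist only of set-vertices: an element-vertex in the solution may be swapped for any adjacent set-vertex, which still dominates that element together with the whole clique, and element-vertices dominate nothing else since they form an independent set. The resulting set-vertices then form a cover of $S$. Hence the reduction is value-preserving, and an $r$-approximation for {\sc Min Dom Set} on $G$ gives an $r$-approximation for {\sc Set Cover} on $(S,\mathcal{F})$.

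The crux is to inherit the tight constant rather than a degraded one, i.e.\ to avoid the loss incurred by passing from $\ln n$ to $\ln N$, where $N=n+|\mathcal{F}|$ is the number of vertices of $G$. The point, which is exactly what the reduction of \cite{chlebik2008approximation} supplies, is that the hard {\sc Set Cover} instances of Theorem \ref{thmdinur} may be taken of near-linear size, $|\mathcal{F}|=n^{1+o(1)}$, so that $N=n^{1+o(1)}$ and $\ln N=(1+o(1))\ln n$. Granting this, suppose for contradiction that for some fixed $\epsilon>0$ there were a polynomial-time algorithm $A$ approximating {\sc Min Dom Set} within $(1-\epsilon)\ln N$ on $N$-vertex graphs. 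Running $A$ on $G$ and reading off the set-vertices produces a cover of size at most $(1-\epsilon)\ln N\cdot\gamma(G)=(1-\epsilon)\,(1+o(1))\ln n\cdot\mathrm{opt}$, which for all large $n$ is at most $(1-\tfrac{\epsilon}{2})\ln n\cdot\mathrm{opt}$; this contradicts Theorem \ref{thmdinur} applied with constant $\tfrac{\epsilon}{2}$.

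It remains to treat the bipartite case, where the clique among set-vertices is disallowed, so those vertices must be dominated by other means. Here I would add a single new vertex $x$ adjacent to all set-vertices together with one pendant vertex $x'$ attached to $x$; the pendant forces $x$ into any minimum dominating set, after which $x$ dominates the whole set-side while the element-side adjacencies, and hence the covering condition, are untouched. This keeps $G$ bipartite (with parts $S\cup\{x\}$ and $\mathcal{F}\cup\{x'\}$), adds only two vertices, and yields $\gamma(G)=\mathrm{opt}+1$. I expect the main obstacle to lie precisely in the bookkeeping of this step: one must check simultaneously that the gadget preserves the value correspondence, that the additive constant is harmless because $\mathrm{opt}\to\infty$ on the hard instances (so the multiplicative factor survives up to $(1+o(1))$), and that $N=n^{1+o(1)}$ is maintained, so that the contradiction of the previous paragraph goes through verbatim.
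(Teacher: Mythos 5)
Your proposal is correct and follows the same route the paper takes: Theorem \ref{thmepdom} is stated there with no written proof, simply by plugging the Dinur--Steurer bound (Theorem \ref{thmdinur}) into the {\sc Set Cover}-to-domination reductions of \cite{chlebik2008approximation}, which is exactly the split-graph construction and bipartite pendant gadget you reconstruct. The only step requiring care is the one you already flag explicitly: that the hard {\sc Set Cover} instances can be taken (or padded) so that $|\mathcal{F}|=n^{1+o(1)}$, whence $\ln N=(1+o(1))\ln n$ and the constant $(1-\epsilon)$ survives the translation.
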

Also, by applying Theorem \ref{thmdinur} to the results of \cite{cicalese2013approximability} we have:
\begin{theorem}
\label{kdomeps}
For every $k \geq 1$ and every $\epsilon >0$, there is no polynomial time algorithm approximating {\sc Min} $k$-{\sc Dom Set} within a factor of $(1-\epsilon) \ln n$, unless $P=NP.$
\end{theorem}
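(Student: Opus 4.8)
The plan is to combine the tight {\sc Set Cover} hardness of Theorem~\ref{thmdinur} with an approximation--preserving reduction from {\sc Set Cover} to {\sc Min} $k$-{\sc Dom Set}, which is the kind of reduction underlying the hardness results of \cite{cicalese2013approximability}. For $k=1$ there is nothing to do: {\sc Min} $1$-{\sc Dom Set} is exactly {\sc Min Dom Set}, so the claim is immediate from Theorem~\ref{thmepdom}. For $k\geq 2$ I would exhibit, given a {\sc Set Cover} instance with universe $U=\{u_1,\dots,u_n\}$ and family $\mathcal{F}=\{S_1,\dots,S_m\}$, a graph $G$ whose $k$-domination number $\gamma_k(G)$ equals the optimal cover size up to a fixed additive constant depending only on $k$.

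The construction I have in mind uses three kinds of vertices. First, a set vertex $s_j$ for each $S_j$, with the $s_j$ forming a clique. Second, an element vertex $x_i$ for each $u_i$, joined to every $s_j$ with $u_i\in S_j$. Third, a shared pool $w_1,\dots,w_{k-1}$ of \emph{free dominators}, each made adjacent to all set and element vertices. The key design goal is to force $w_1,\dots,w_{k-1}$ into every $k$-dominating set while paying only $O(k^2)$ extra vertices. To pin a single high--degree $w_t$, I would attach to it two private ``demander'' vertices, each of degree $k$ whose remaining $k-1$ neighbours are private leaves: since any vertex of degree below $k$ lies in every $k$-dominating set, the leaves are forced in, each demander then needs exactly one further dominator, and including $w_t$ (cost one) is strictly cheaper than putting both demanders in $D$ (cost two), so no optimal solution omits $w_t$. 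With the pool forced in, every element vertex $x_i$ outside $D$ already sees its $k-1$ pool neighbours and needs precisely one more dominator, which can only be a set vertex $s_j$ with $u_i\in S_j$; dually, the clique among the $s_j$ guarantees that once any set vertex is chosen, every unselected set vertex is $k$-dominated. Hence the selected set vertices of a $k$-dominating set form a cover, and conversely, so that $\gamma_k(G)=\mathrm{OPT}_{\mathrm{SC}}+C$ with $C=O(k^2)$ independent of the instance.

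Assuming this correspondence, a $\rho$-approximation for {\sc Min} $k$-{\sc Dom Set} returns a $k$-dominating set of size at most $\rho(\mathrm{OPT}_{\mathrm{SC}}+C)$, from which I recover a cover of size at most $\rho\,\mathrm{OPT}_{\mathrm{SC}}+(\rho-1)C$. Because the hard {\sc Set Cover} instances supplied by Theorem~\ref{thmdinur} have $\mathrm{OPT}_{\mathrm{SC}}=\omega(1)$ and only $m=\mathrm{poly}(n)$ sets, the additive term $C=O(k^2)$ is negligible against $\mathrm{OPT}_{\mathrm{SC}}$ and the vertex blow--up satisfies $\ln|V(G)|=(1+o(1))\ln n$. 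Consequently a $(1-\epsilon)\ln n$-approximation for {\sc Min} $k$-{\sc Dom Set} would yield a $(1-\epsilon')\ln n$-approximation for {\sc Set Cover} for some $\epsilon'>0$, contradicting Theorem~\ref{thmdinur} unless $P=NP$.

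The main obstacle is precisely the tightness of the constant. A careless gadget inflates $\gamma_k(G)$ by a term proportional to $n$ (for instance by forcing $k-1$ private dominators \emph{per} element), which would collapse the multiplicative gap and destroy the $(1-\epsilon)$ factor; the whole point of sharing the single pool $w_1,\dots,w_{k-1}$ and of the two--demander forcing gadget is to keep the overhead at a fixed $O(k^2)$. I would therefore concentrate the verification on two points: that the forcing gadget admits \emph{no} optimal $k$-dominating set avoiding some $w_t$, and that element vertices placed in $D$ never cover more cheaply than set vertices (handled, if needed, by adjoining singleton sets to $\mathcal{F}$, which does not affect the {\sc Set Cover} gap). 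Given these, the factor $(1-\epsilon)\ln n$ transfers verbatim for every fixed $k\geq 1$.
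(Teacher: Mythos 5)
Your route is genuinely different from the paper's. The paper does not reduce from {\sc Set Cover} directly: it obtains Theorem~\ref{kdomeps} by plugging the Dinur--Steurer bound (Theorem~\ref{thmdinur}) into the reduction of Cicalese~\etal, which starts from {\sc Min Dom Set} (whose $(1-\epsilon)\ln n$ hardness, with $n$ the number of \emph{vertices}, is already secured as Theorem~\ref{thmepdom}) and simply adjoins $k-1$ new vertices joined to all of $V(G)$, so that $\gamma_k(G')\leq\gamma(G)+k-1$ and any $k$-dominating set of $G'$ restricted to $V(G)$ dominates $G$. You can see this skeleton explicitly in the paper's proof of Theorem~\ref{thm8}, which is advertised as ``the proof of Theorem~\ref{kdomeps} with modifications.'' That route costs only $O(k)$ extra vertices and an additive $O(k)$ in the optimum, the latter absorbed by the standing assumption that $\gamma(G)$ exceeds a constant depending on $k,\epsilon$ (otherwise brute force solves the instance). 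Your construction --- a set-vertex clique, element vertices, a shared pool of $k-1$ universal dominators, and the two-demander gadgets forcing the pool into every optimal solution --- is sound as a gadget and does establish $\gamma_k(G)=\mathrm{OPT}_{\mathrm{SC}}+O(k^2)$.

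The genuine gap is in the final parameter transfer. You assert $\ln|V(G)|=(1+o(1))\ln n$ on the grounds that $m=\mathrm{poly}(n)$, but that is not sufficient: your graph has $n+m+O(k^2)$ vertices, and if the hard instances of Theorem~\ref{thmdinur} have $m=n^{c}$ with $c>1$, then $\ln|V(G)|\approx c\ln n$, and a hypothetical $(1-\epsilon)\ln|V(G)|$-approximation for {\sc Min} $k$-{\sc Dom Set} only yields a $(1-\epsilon)c\ln n$-approximation for {\sc Set Cover} --- no contradiction with Theorem~\ref{thmdinur}. Theorem~\ref{thmdinur} as stated gives you no control over $m$ relative to $n$, so this step needs either a proof that the hard instances satisfy $m=n^{1+o(1)}$ or a padding argument, neither of which you supply. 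This is exactly the subtlety that reducing from {\sc Min Dom Set} sidesteps, since Theorem~\ref{thmepdom} is already phrased in terms of the vertex count and the blow-up there is only additive $O(k)$. (A second, smaller gap: you assert the hard instances have $\mathrm{OPT}_{\mathrm{SC}}=\omega(1)$, which again is not part of Theorem~\ref{thmdinur}; you need the standard ``otherwise brute-force in polynomial time'' argument, as the paper makes explicit in the proof of Theorem~\ref{thm8}.) The cleanest fix is to follow the paper: keep your $k=1$ observation, and for $k\geq 2$ reduce from {\sc Min Dom Set} by attaching $k-1$ universal vertices, discarding the {\sc Set Cover}-level gadgetry entirely.
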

\begin{theorem}
\label{thmalpha}
For every $\alpha\in(0,1)$ and every $\epsilon >0$, there is no polynomial time algorithm approximating {\sc Min} $\alpha$-{\sc Dom Set} within a factor of $(\frac{1}{2}-\epsilon) \ln n$, unless $P=NP.$
\end{theorem}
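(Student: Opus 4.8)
The plan is to derive the bound by composing the reduction from {\sc Set Cover} to {\sc Min} $\alpha$-{\sc Dom Set} of \cite{cicalese2013approximability} with the sharp {\sc Set Cover} lower bound of Theorem \ref{thmdinur}. Fix $\alpha\in(0,1)$. Given a {\sc Set Cover} instance $(\pointset,{\cal F})$ with $|\pointset|=n$, that reduction produces in polynomial time a graph $G_\alpha$ on $N$ vertices, together with a correspondence under which set covers of $(\pointset,{\cal F})$ and $\alpha$-dominating sets of $G_\alpha$ have the same size up to a fixed additive term. The first step I would carry out is to make this correspondence explicit: describe the gadget glued to each element and to each set so that an element-vertex can stay outside the $\alpha$-dominating set precisely when a set covering it has been selected, and confirm that the construction runs in polynomial time.

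The quantity that actually produces the constant $\tfrac12$ is the rate at which $N$ grows with $n$. To make the rounded threshold $\lceil\alpha\,d_v\rceil$ coincide with ``at least one covering set is chosen'' for every element-vertex $v$, the degrees must be padded, and it is this padding, applied across all $\Theta(n)$ gadgets, that I expect to force a quadratic blow-up $N=\Theta(n^2)$; I would pin the construction down to confirm $\ln N = 2\ln n + O(1)$. This estimate deserves care, because it is exactly the quadratic blow-up of the ground set into the vertex set that turns the factor $(1-\epsilon)\ln n$ for {\sc Set Cover} into the factor $(\tfrac12-\epsilon)\ln N$ for {\sc Min} $\alpha$-{\sc Dom Set}.

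I would then argue by contraposition. Assume a polynomial-time algorithm approximates {\sc Min} $\alpha$-{\sc Dom Set} within $(\tfrac12-\epsilon)\ln N$. Applying it to $G_\alpha$ and transporting the output back through the correspondence yields a set cover of $(\pointset,{\cal F})$ of size within a factor $(\tfrac12-\epsilon)\ln N=(\tfrac12-\epsilon)(2\ln n+O(1))=(1-2\epsilon+o(1))\ln n$ of optimal. For $n$ large enough this contradicts Theorem \ref{thmdinur} applied with $\epsilon'=2\epsilon$, so no such algorithm exists unless $P=NP$.

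The main obstacle I expect is the joint bookkeeping of the two size parameters and the two approximation ratios. One must verify that the additive gap between the $\alpha$-domination number of $G_\alpha$ and the {\sc Set Cover} optimum is negligible next to the optima themselves, so that a multiplicative guarantee really is preserved in both directions, and that the padding gadgets realize the intended threshold $\lceil\alpha\,d_v\rceil$ for every admissible degree and every fixed $\alpha$. Once the blow-up $N=\Theta(n^2)$ and this gap preservation are in place, substituting Theorem \ref{thmdinur} gives the claim at once.
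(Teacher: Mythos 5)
Your high-level plan --- compose the reduction of \cite{cicalese2013approximability} with the Dinur--Steurer bound of Theorem \ref{thmdinur} --- is the right one, and it is essentially how the paper obtains Theorem \ref{thmalpha}. But your account of where the constant $\tfrac{1}{2}$ comes from is wrong, and the point you flag as ``deserving care'' is exactly where the proposal breaks. In the actual construction (visible in the paper's proof of Theorem \ref{thm9}, which replays it), the degree padding is \emph{shared}: one adds two sets $K_1,K_2$ of only $N\Delta(G)$ vertices each, with $N=\lceil\alpha/(1-\alpha)\rceil$ a constant, and joins each original vertex $v$ to $k_v=\lceil(\alpha d_G(v)-1)/(1-\alpha)\rceil$ of them. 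So the padding adds $O(\Delta(G))=O(n)$ vertices, the blow-up is $|V(G')|\leq n^{1+\epsilon}$, and it costs only a $(1+\epsilon)$ factor in the logarithm --- not a factor of $2$. If instead each of the $\Theta(n)$ element gadgets got its own private $\Theta(n)$-sized padding (which is what your estimate $N=\Theta(n^2)$ presupposes), the additive gap between the $\alpha$-domination number of the constructed graph and the original optimum would itself blow up, and the multiplicative guarantee would not transfer in either direction.

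What is actually responsible for the $\tfrac{1}{2}$ --- and what is missing from your proposal --- is a separate preliminary step: one first proves hardness of {\sc Min Dom Set} restricted to instances satisfying $\gamma(G)\geq B\Delta(G)$ (Lemma \ref{lemlem}, i.e.\ the first part of the argument of Cicalese et al.). That restriction is obtained by amplification (disjoint copies of the hard instance); since $\Delta(G)$ can be of order $n$ while $\gamma(G)$ is small, the amplified instance can have $\Theta(n^2)$ vertices, and \emph{that} is the quadratic blow-up converting $(1-\epsilon)\ln n$ into $(\tfrac{1}{2}-\epsilon)\ln N$. The precondition $\gamma(G)\geq B\Delta(G)$ is then precisely what makes the additive term $2k=2N\Delta(G)$ in $\gamma(G)\leq\gamma_{\alpha}(G')\leq\gamma(G)+2k$ at most $\epsilon\gamma(G)$, which resolves the gap-preservation issue you raise but do not settle. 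Without this two-stage structure (hardness on degree-bounded-relative-to-$\gamma$ instances first, then the constant-factor padding), the argument does not go through; also note the paper reduces from {\sc Min Dom Set} rather than directly from {\sc Set Cover}, though that difference is cosmetic.
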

\subsection{{APX-completeness of} {\sc Min} $k$-{\sc Dom Set} {in graphs of degree at most $k + 2$}}
%\begin{theorem}
%For any fixed integer $k\geq 1$,  it is NP-hard to %approximate  {\sc Min} $k$-{\sc Dom Set}-$(k+2)$B %within $1+\frac{1}{?}$
%\end{theorem}
%\begin{proof}
%The proof is almost similar to proof of Theorem  \ref{kdomeps} in somewheres. Suppose $c=$.  We will make a reduction from  bipartite graphs $G$ of maximum degree 3 with $n$ vertices such that 
%\begin{flalign*}
%&n+2k-2\leq n^{1+c}. &
%\end{flalign*}
%and
%\begin{flalign*}
%&\gamma(G)\geq\frac{2(k-1)(1+c)}{c^2}.&
%\end{flalign*}
%It is clear that we can have the above assumption %without loss of generality. Also it is easy to see that 
%\begin{flalign*}
%&\gamma(G)+2(k-1)\leq\frac{1+c+c^2}{1+c}\cdot \gamma(G).&
%\end{flalign*}
%Let $G=(V_1, V_2, E)$ be a bipartite graph with above assumptions. We transform $G$ into a  graph $G'$ by adding to it two sets $K_1$ and $K_2$  such that each of sets $K_1$ and $K_2$ have $k-1$ vertices. The sets $K_1$, $K_2$ and $V(G)$ are pairwise disjoint. Join each vertex of $V_1$ to each vertex of $K_2$ and join each vertex of $V_2$ to each vertex of $K_1$. We must be note the induced graph on the sets $K_1$ and $K_2$ has no edges. We can easily prove $\gamma_k(G')\leq \gamma(G)+2k-2$ (see\cite{ff}).
%\end{proof}

%%%%%%%%%%%%%%%%%%%%%%

Now, we  use  Theorem \ref{kdapp} to prove that the problem {\sc Min} $k$-{\sc Dom Set} is APX-complete on  the graphs of   degree bounded by $k+2$  {for all constants $k\geq 2$}. For brevity,  we call the restricted problem {\sc Min} $k$-{\sc Dom Set} to the graphs  {of degree bounded by  $R$}  by {\textsc{Min}} $k$-{\textsc {Dom Set}}-$R$.   {Also in the case of}  {\textsc{Min}} $k$-{\textsc {Dom Set}}-$R$, when $k=1$, for brevity, we call the related problem by  {\textsc{Min}} {\textsc {Dom Set}}-$R$.

{First}, we recall  {the {notion} of $L$-reduction. }
\begin{definition}
\label{defdef}
($L$-reduction)\cite{ausiello}. Given two NP optimization problems $F$ and $G$ and a polynomial transformation  {$f:Inst(F)\rightarrow Inst(G)$, where $Inst(F)$ is the set of instances of $F$}; we say
that $f$ is an $L$-reduction if there are  {two} positive constants
$\alpha$ and $\beta$ such that for every instance $x$ of $F$
\begin{enumerate}
\item $opt_G(f(x))\leq \alpha opt_F (x)$
\item for every feasible solution $y$ of $f(x)$ with objective value $m_G(f (x),y) = c_2$, we can (in polynomial time) find a solution $y'$ of $x$ with $m_F (f (x),y')=c_1$ such that $|opt_F (x)-c_1|\leq \beta|opt_G(f (x))-c_2|$.
\end{enumerate}
\end{definition}
{To prove  a problem $F$ is APX-complete, it is {sufficient} to prove that}  $F\in$APX  {and that there} is an $L$-reduction from an APX-complete problem to the problem $F$.

\begin{theorem}
\label{thmApxkdom}
{\textsc{ Min}} $k$-{\textsc {Dom Set}}-$(k+2)$ is  {an} APX-complete  {problem} for any $k\geq 1$.
\end{theorem}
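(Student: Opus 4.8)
The plan is to establish the two properties required by the criterion stated just above the theorem: that {\textsc{Min}} $k$-{\textsc{Dom Set}}-$(k+2)$ lies in APX, and that there is an $L$-reduction to it from a problem already known to be APX-complete. Membership in APX is immediate from Theorem~\ref{kdapp}: that result approximates {\textsc{Min}} $k$-{\textsc{Dom Set}} within factor $\ln(2\Delta(G))+1$, and on instances of degree at most $k+2$ this factor is bounded by the constant $\ln(2(k+2))+1$. So it remains only to prove APX-hardness.

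For hardness I would $L$-reduce from {\textsc{Min Dom Set}}-$3$, minimum dominating set on graphs of maximum degree $3$, which is known to be APX-complete; for $k=1$ the two problems coincide and there is nothing to do, so assume $k\geq 2$. Given an input graph $H=(V,E)$ with $\Delta(H)\leq 3$ and $n=|V|$, let $f(H)=G'$ be obtained by attaching to each vertex $v\in V$ exactly $k-1$ new pendant vertices, each adjacent only to $v$. Then $\deg_{G'}(v)=d_v+(k-1)\leq k+2$, while every pendant has degree $1$, so $\Delta(G')\leq k+2$ and $f$ is computable in polynomial time.

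The correspondence rests on one structural fact: a degree-$1$ vertex cannot possess $k\geq 2$ neighbours, so every feasible $k$-dominating set of $G'$ necessarily contains all $(k-1)n$ pendants. Consequently, if $D'$ is $k$-dominating in $G'$ then $D:=D'\cap V$ dominates $H$, because each $v\notin D$ has at least $k$ neighbours in $D'$, of which at most $k-1$ are its own pendants, forcing at least one original neighbour into $D$; conversely, $D$ together with all pendants is $k$-dominating in $G'$ whenever $D$ dominates $H$. Hence $\gamma_k(G')=\gamma(H)+(k-1)n$, and for any feasible $D'$ the set $D=D'\cap V$ satisfies $|D|=|D'|-(k-1)n$ exactly. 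The latter identity gives $\gamma(H)-|D|=\gamma_k(G')-|D'|$, so condition~(2) of Definition~\ref{defdef} holds with $\beta=1$; for condition~(1) I would invoke the elementary bound $\gamma(H)\geq n/(\Delta(H)+1)\geq n/4$, which converts $\gamma_k(G')=\gamma(H)+(k-1)n$ into $\gamma_k(G')\leq(4k-3)\,\gamma(H)$, so that $\alpha=4k-3$ works.

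The main obstacle is quantitative rather than structural: the additive blow-up $(k-1)n$ introduced by the gadget must be charged against a multiplicative constant $\alpha$, which is precisely why the linear lower bound $\gamma(H)=\Omega(n)$ for bounded-degree graphs is indispensable for condition~(1). The only other point requiring care is that the source degree bound must be exactly $3$, since it is this bound that forces the target degree to be $k+2$ rather than larger, and that the forced inclusion of the pendants is what makes the solution transformation in condition~(2) exact, yielding $\beta=1$.
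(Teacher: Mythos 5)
Your proposal is correct and follows essentially the same route as the paper's proof: the same pendant-vertex gadget ($k-1$ new degree-one neighbours per vertex), the same identity $\gamma_k(G')=\gamma(H)+(k-1)n$, and the same constants $\alpha=4k-3$ (via $\gamma(H)\geq n/4$ for degree-$3$ graphs) and $\beta=1$. If anything, you are slightly more explicit than the paper in justifying the exact equality $|D|=|D'|-(k-1)n$ by observing that every degree-one pendant is forced into any $k$-dominating set when $k\geq 2$.
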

\begin{proof}
{It is known} that {\sc Min Dom Set}-3 is  APX-complete \cite{alimonti1997hardness}.  {So we consider the case $k>1$}. Clearly, by Theorem \ref{thm1kapx}  if {the vertex degrees of the graph are} bounded by a constant, then the approximation ratio is constant. Thus  {the} problem {\sc Min} $k$-{\sc Dom Set}-$(k+2)$ is in APX. Now, we show that there is  an $L$-reduction $f_k$ from {\sc Min Dom Set}-3 to {\sc Min} $k$-{\sc Dom Set}-$(k+2)$. Suppose that $G=(V, E)$ is a  graph  of   {maximum degree at most 3}. We construct a graph $G_k = (V_k, E_k)$ of   {maximum degree at most $k +2$} as follows.
For each vertex $v\in V$, suppose that $S_v$ is a set of $k-1$ new vertices, in particular,   we assume that the sets $S_v$ are pairwise disjoint and also for each $v\in V$, $S_v\cap V=\emptyset$. Now join each vertex $v \in V$ to each vertex of the set $S_v$.
 It is clear that the maximum degree of $G_k$  {is at most $k +2$.}

Now, we define the transformation $f_k$  {as} $f_k(G)=G_k$. Let $D_k$ be a $k$-dominating set in  $G_k$. We define the set $D$ as follows:
$$D=D_k-\left(\bigcup_{v\in V(G)}{S_v}\right).$$
Suppose that $v$ is a vertex of $G$ outside $D$. Clearly $v$ is outside $D_k$ too.  Since in the graph $G_k$ the vertex $v$  is adjacent to the elements of $S_v$  (recall $|S_v|=k-1$) and $v$  is dominated by at least $k$ vertices in $D_k$, clearly there must be  at least one vertex in  $D$ which dominates $v$. Hence, $D$ is a dominating set in $G$. Now, we have
\begin{equation}
\label{eqgam1}
\gamma(G)\leq |D|= |D_k|-(k-1)n,
\end{equation} 
where $n =|V|$.

On the other hand, let $D$ be  any dominating set in $G$. Clearly  the set
$$D_k=\left(\bigcup_{v\in V(G)}{S_v}\right)\cup D$$
is a $k$-dominating set in $G_k$. So 
\begin{equation}
\label{eqgamk}
\gamma_k(G_k)\leq |D_k|= |D|+(k-1)n.
\end{equation} 

Now suppose that $D_k^*$ and $D^*$ are optimal $k$-dominating set in $G_k$ and optimal dominating set in $G$, respectively.  According to  equations (\ref{eqgam1}) and (\ref{eqgamk}), we have   {$\gamma_k(G_k)=\gamma(G)+(k-1)n$, which} results in
$|D|-|D^*|=|D_k|-|D_k^*|.$

Also since the maximum degree of $G$ is bounded by 3,   any dominating set $D$ in $G$ contains at least  $\frac{n}{4}$ elements (see \cite{haynes1998fundamentals1}). Hence, since  $\gamma_k(G_k)=\gamma(G)+(k-1)n$, it is easy to see that  $|D^*_k|=|D^*|+(k-1)n\leq (4k-3)|D^*|$. Hence,
$$|D_k^*|\leq (4k-3)|D^*|,$$
 {and}  $f_k$ is an $L$-reduction with parameters  $\alpha=4k-3$ and $\beta=1$. So problem {\textsc{Min}} $k$-{\textsc {Dom Set}}-$(k+2)$ is APX-complete. 
\end{proof}
{Consider the  problem} {\textsc{Min}} $k$-{\textsc {Dom Set}}-R     restricted to bipartite graphs, denoted by  {\textsc{Min}} $k$-{\textsc {Dom Set}}-RB. Also, consider the  {problem} {\textsc{Min}} $k$-{\textsc {Dom Set}}-R     restricted to chordal graphs, denoted by  {\textsc{Min}} $k$-{\textsc {Dom Set}}-RC. 
For $k=1$, we call the  special  {cases} {\textsc{Min}} $k$-{\textsc {Dom Set}}-RB and {\textsc{Min}} $k$-{\textsc {Dom Set}}-RC by {\textsc{Min}} {\textsc {Dom Set}}-RB and {\textsc{Min}} {\textsc {Dom Set}}-RC, respectively. Chleb\'{\i}k and Chleb\'{\i}kov\'{a} \cite{chlebik2008approximation} proved that it is NP-hard to approximate  the problem {\sc Min Dom Set}-3B within factor $1+\frac{1}{390}$. Thus, we can easily conclude that the problem {\sc Min Dom Set}-3B is APX-complete.  Using  {the} same construction discussed in the proof of Theorem  \ref{thmApxkdom}, we can show that the problem {\sc Min} $k$-{\sc Dom Set}-$(k+2)$B is APX-complete. Hence we have the following result. 
\begin{theorem}
For any $k\geq  1$,   {the} problem {\sc Min} $k$-{\sc Dom Set}-$(k+2)$B is APX-complete.
\end{theorem}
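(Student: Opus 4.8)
The plan is to mirror the argument of Theorem~\ref{thmApxkdom} and simply verify that every ingredient survives the restriction to bipartite inputs. First I would pin down the APX-hardness starting point: as already noted, the inapproximability bound of Chleb\'{\i}k and Chleb\'{\i}kov\'{a} shows that {\sc Min Dom Set}-3B is APX-hard, and since its instances have degree bounded by $3$, Theorem~\ref{thm1kapx} with $\Delta(G)\le 3$ yields a constant-factor approximation, so {\sc Min Dom Set}-3B lies in APX and is therefore APX-complete. This is the problem I would $L$-reduce from.

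Next I would reuse the transformation $f_k$ of Theorem~\ref{thmApxkdom} verbatim: given a bipartite graph $G=(V,E)$ of maximum degree at most $3$, attach to each $v\in V$ a private set $S_v$ of $k-1$ fresh vertices and join $v$ to each of them, obtaining $G_k$ of maximum degree at most $k+2$. The only genuinely new point to check is that $G_k$ is again bipartite. This is immediate because every vertex of $S_v$ is a leaf adjacent only to $v$: if $(A,B)$ is a bipartition of $G$, then placing each $S_v$ in the colour class opposite to $v$, i.e. taking the bipartition $\left(A\cup\bigcup_{v\in B}S_v,\ B\cup\bigcup_{v\in A}S_v\right)$, is a proper $2$-colouring of $G_k$. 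Hence $f_k$ maps instances of {\sc Min Dom Set}-3B to instances of {\sc Min} $k$-{\sc Dom Set}-$(k+2)$B.

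Finally, I would observe that the counting in Theorem~\ref{thmApxkdom}, namely the inequalities $\gamma(G)\le |D_k|-(k-1)n$ and $\gamma_k(G_k)\le |D|+(k-1)n$, the resulting identity $\gamma_k(G_k)=\gamma(G)+(k-1)n$, and the bound $\gamma(G)\ge n/4$ for graphs of maximum degree $3$, makes no use of bipartiteness and rests only on the degree bound and on the leaf structure of the sets $S_v$. These all carry over unchanged to bipartite inputs, so $f_k$ is again an $L$-reduction with parameters $\alpha=4k-3$ and $\beta=1$; combined with membership in APX (Theorem~\ref{thm1kapx} with $\Delta(G)\le k+2$), this establishes APX-completeness of {\sc Min} $k$-{\sc Dom Set}-$(k+2)$B. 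I expect no real obstacle here: the single step requiring any care is confirming that bipartiteness is preserved, and since the construction adds only pendant vertices this is essentially automatic.
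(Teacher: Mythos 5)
Your proposal is correct and follows essentially the same route as the paper: it invokes the APX-completeness of {\sc Min Dom Set}-3B via the Chleb\'{\i}k--Chleb\'{\i}kov\'{a} bound and then reuses the pendant-vertex construction $f_k$ from Theorem~\ref{thmApxkdom}, which is exactly what the paper does. Your explicit verification that attaching the leaf sets $S_v$ preserves bipartiteness is a detail the paper leaves implicit, and it is handled correctly.
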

{Note that for every positive integer $k$, the} {\sc {Min Dom Set}}-$k$C {problem is solvable in polynomial time.} This is because a chordal graph of bounded maximum degree has  bounded clique number and therefore it has  bounded treewidth (see \cite{BODLAENDER19981}). For any fixed $k$, the property that a subset of vertices is a $k$-dominating set can be straightforwardly expressed by a formula in Monadic Second
Order Logic. Therefore, the $k$-domination problem is polynomial time  solvable in any class of graphs of {bounded treewidth} \cite{Arnborg1991},  {and even more generally, in any class of graphs of bounded clique-width, using the
general results from} \cite{Courcelle2000,Oum2006}.  {Polynomial} time solvability of the $k$-domination problem in graphs of bounded clique-width follows also from the results in \cite{Cicalese2014},  where a dynamic programming solution was
developed for a more general problem, which includes the variant of $k$-domination where  $k$ is given as input.

\subsection{Inapproximability of {\sc Min} $k$-{\sc Dom Set} in bipartite graphs and split graphs}
In the following, we  present some results on the inapproximability of  {\sc Min} $k$-{\sc Dom Set}  when the problem  {is} restricted to some special classes of graphs.

 {By} some modifications of  {the} proof of Theorem \ref{kdomeps}, we prove that the result in Theorem~\ref{kdomeps} holds even {if} the  {problem is restricted} to  {bipartite graphs.} 
\begin{theorem}
\label{thm8}
For every $k \geq 1$ and every $\epsilon >0$, there is no polynomial time algorithms approximating {\sc Min} $k$-{\sc Dom Set} for bipartite graphs within a factor of $(1-\epsilon) \ln n$, unless $P=NP.$
\end{theorem}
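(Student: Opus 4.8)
The plan is to reduce {\sc Set Cover} directly to {\sc Min} $k$-{\sc Dom Set} on bipartite graphs, mirroring the reduction that yields Theorem~\ref{kdomeps} but engineering the construction so that the resulting graph is $2$-colourable; the inapproximability factor then follows from Theorem~\ref{thmdinur} exactly as in Theorem~\ref{kdomeps}. Given a {\sc Set Cover} instance with ground set $\{e_1,\dots,e_N\}$ and sets $T_1,\dots,T_M$, I would build a bipartite graph $G$ with one \emph{element vertex} $x_i$ per element, one \emph{set vertex} $y_j$ per set, and the edge $x_iy_j$ whenever $e_i\in T_j$. This core is bipartite, with the $x_i$ on one side and the $y_j$ on the other, and it already encodes covering: a set vertex $y_j$ placed in the solution ``covers'' exactly the elements of $T_j$. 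The difficulty is that each element vertex must now be dominated $k$ times rather than once, and that every set vertex must itself be $k$-dominated.

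To reduce $k$-fold domination of the element vertices back to ordinary (single) covering, I would add $k-1$ shared \emph{element-helpers} $a_1,\dots,a_{k-1}$, each adjacent to all element vertices, and force every $a_\ell$ into every minimum $k$-dominating set by a small bipartite gadget (for instance, attaching to each $a_\ell$ a few auxiliary vertices whose only ``missing'' dominator is $a_\ell$ itself, so that excluding $a_\ell$ is strictly more expensive than including it). Once the helpers are in the solution, each $x_i\notin D$ automatically receives $k-1$ dominators from the $a_\ell$, so it is $k$-dominated if and only if at least one set vertex covering $e_i$ is chosen---precisely the covering condition. Symmetrically, I would add $k$ shared \emph{set-helpers} $b_1,\dots,b_k$ adjacent to all set vertices and force them in the same way, so that every set vertex is $k$-dominated ``for free'' and is never selected except to cover elements. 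All helpers and gadget vertices attach to one side or the other without creating odd cycles, so a routine $2$-colouring shows $G$ is bipartite.

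With this construction I would prove $\gamma_k(G)=\mathrm{OPT}_{\mathrm{SC}}+C(k)$ for a constant $C(k)$ depending only on $k$: any set cover yields a $k$-dominating set consisting of the forced helpers, the forced gadget vertices, and the chosen set vertices; conversely, from any $k$-dominating set $D$ one extracts a cover by replacing each element vertex that happens to lie in $D$ with one set vertex covering its element (a swap that preserves $|D|$ and, because the helpers and set-helpers already dominate everything else, cannot destroy $k$-domination). Since $C(k)=O(1)$ and the number of added vertices is $O(1)$ as well, the graph has $n=N+M+O(1)$ vertices, so $\ln n$ relates to the ground-set size exactly as in the derivation of Theorem~\ref{kdomeps}; hence a polynomial-time $(1-\epsilon)\ln n$ approximation for {\sc Min} $k$-{\sc Dom Set} on $G$ would yield a $(1-\epsilon')\ln N$ approximation for {\sc Set Cover}, contradicting Theorem~\ref{thmdinur}.

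The main obstacle I anticipate is the forcing gadget: for $k\geq 2$ a pendant vertex is itself forced into $D$ (it cannot collect $k$ neighbours), which pins down the \emph{leaf} rather than its parent, so na\"ively hanging pendants on a helper does not force the helper in. The fix is a gadget that makes omitting the helper strictly costlier than including it while contributing only $O(1)$ forced vertices, and the verification that this overhead is a genuinely additive $O(1)$---not a multiplicative factor nor an $\Omega(N)$ term, either of which would collapse the logarithmic gap---is the step requiring the most care. The bipartiteness and the cover/uncover correspondence are then routine.
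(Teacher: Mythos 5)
Your construction has a genuine hole exactly where you flag one: the forcing gadget for the helper vertices is never built, and your whole correspondence $\gamma_k(G)=\mathrm{OPT}_{\mathrm{SC}}+C(k)$ rests on it. As you correctly observe, pendant-based forcing fails for $k\geq 2$ (a degree-one vertex is itself forced into $D$, not its neighbour), and no replacement gadget is supplied. The good news is that no gadget is needed. Given \emph{any} $k$-dominating set $D$, the set $D\cup H$, where $H$ is the collection of all $2k-1$ helpers, is still a $k$-dominating set of size at most $|D|+2k-1$; from $D\cup H$ you extract a cover of size at most $|(D\cup H)\setminus H|\leq |D|$ by your swap argument, since once all $a_\ell$ lie in the solution every undominated-looking element vertex must see a set vertex. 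Conversely a cover of size $s$ gives a $k$-dominating set of size $s+2k-1$. The additive $O(1)$ is then absorbed by assuming $\mathrm{OPT}_{\mathrm{SC}}$ exceeds any fixed constant (smaller instances are solved by brute force in polynomial time), which is precisely how the paper disposes of its own additive term. One further point you wave at too quickly: with a direct {\sc Set Cover} reduction you must check that $\ln(N+M+O(1))\leq(1+\epsilon)\ln N$, i.e.\ that the hard instances of Theorem~\ref{thmdinur} have a controlled number of sets; this bookkeeping is real and is not ``exactly as in the derivation of Theorem~\ref{kdomeps}'' unless you inherit it explicitly from that proof.

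The paper takes a different and shorter route that avoids both issues. It reduces from {\sc Min Dom Set} on \emph{bipartite} graphs, whose $(1-\epsilon)\ln n$ inapproximability is already available (Theorem~\ref{thmepdom}, via Chleb\'{\i}k--Chleb\'{\i}kov\'{a} and Dinur--Steurer). Given bipartite $G=(V_1,V_2,E)$ with $\gamma(G)\geq \frac{2(k-1)(1+\epsilon)}{\epsilon^2}$ and $n+2k-2\leq n^{1+\epsilon}$, it attaches independent sets $K_1,K_2$ of size $k-1$, joining $K_1$ to all of $V_1$ and $K_2$ to all of $V_2$; then $\gamma_k(G')\leq\gamma(G)+2k-2$ and $D'\cap V(G)$ is a dominating set of $G$ for any $k$-dominating set $D'$ of $G'$. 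This is essentially your helper idea applied one level up, with the {\sc Set Cover}-to-graph conversion already amortised into the base hardness result. Your approach is salvageable and self-contained once the augmentation argument replaces the missing gadget, but as written the central claim is unproved.
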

\begin{proof}
It is {sufficient} that  we make the following modifications in the proof of Theorem \ref{kdomeps}.

We make a reduction from {the} domination {problem} on bipartite graphs. {Let  $G=(V_1, V_2, E)$ be a bipartite graph} with $n$ vertices such that $n+2k-2 \leq n^{1+\epsilon}$
and
 {$\gamma(G)\geq \frac{2(k-1)(1+\epsilon)}{\epsilon^2}$. }
Note that if  {$\gamma(G)< \frac{2(k-1)(1+\epsilon)}{\epsilon^2}$}, then with a brute-force algorithm  in polynomial time we can solve the problem {\sc Min} $k$-{\sc Dom Set}. Hence, without loss of generality,  we can have the above assumptions. Since  {$\gamma(G)\geq \frac{2(k-1)(1+\epsilon)}{\epsilon^2}$}, with some algebraic computations, we have 
\begin{equation}
\label{eqjaj3}
\gamma(G)+2k-2\leq\frac{1+\epsilon+\epsilon^2}{1+\epsilon} \times\gamma(G).\end{equation}
Now,  {transform} $G$ into a  graph $G'$ by adding {two new pairwise disjoint sets $K_1$ and $K_2$ } to $G$ such that each of sets $K_2$ and $K_1$ have $k-1$ vertices inducing a
 graph  {with} no edges.  Join each vertex of $V_1$ to each vertex of $K_1$ and join each vertex of $V_2$ to each vertex of $K_2$. Note that there is no edges between sets $K_1$ and $K_2$. Obviously graph $G'$ is bipartite.  It is not hard to see that if $D$ is a dominating set in $G$, then $D\cup K_1\cup K_2$ is a $k$-dominating set in $G'$. Thus  $\gamma_k(G')\leq \gamma(G)+2k-2.$

Now  suppose that there is a polynomial time approximation algorithm that {computes} a $k$-dominating set $D'$ in $G'$ such that  $|D'|\leq (1-\epsilon)\ln(|V(G')|)\gamma_k(G')$. It is not hard  to see that  $D:=D'\cap V(G)$ is a dominating set in  $G$. So we have  
\begin{align}
\nonumber
|D|&\leq |D'|\\\nonumber
&\leq  (1-\epsilon)(\ln |V(G')|) \gamma_{k}(G')\\\nonumber
&\leq(1-\epsilon)(\ln (n+2k-2))(\gamma(G)+2k-2)\\ \nonumber
&\leq(1-\epsilon)(\ln n^{1+\epsilon})  (\gamma(G)+2k-2) \mbox{~(since $n+2k-2 \leq n^{1+\epsilon}$)  } \\\nonumber 
&\leq(1-\epsilon)(\ln n^{1+\epsilon})  \left(\frac{1+\epsilon+\epsilon^2}{1+\epsilon}\right)\gamma(G) \mbox{ (by Equation \ref{eqjaj3}) }\\\nonumber
&=(1-\epsilon)(\ln n)  ({1+\epsilon+\epsilon^2})\gamma(G)\\\nonumber
&=(1-\epsilon')(\ln n)\gamma(G),
\end{align}
where $\epsilon'=\epsilon^3>0.$  So  {$D$ approximates the domination number in  $G$}  within factor $(1-\epsilon')\ln n$.  {By Theorem} \ref{thmepdom}, {this implies that P=NP}. Hence there is no polynomial time algorithms approximating {\sc Min} $k$-{\sc Dom Set} for bipartite graphs within a factor of $(1-\epsilon) \ln n$, unless P=NP.
\end{proof}
A {\it split graph}  {is a graph {whose} vertices can be partitioned into an independent set and a clique, and a} {\it chordal graph}  {is a graph such that   every cycle with  four or more vertices has a chord}. It is clear that in the proof of Theorem \ref {kdomeps} if $G$ is  {a split} (chordal) graph, then $G'$ is a split (chordal) graph too. Hence  the following result holds. 
\begin{corollary}
\label{col1}
For every $k \geq 1$, every $\epsilon >0$, there is no polynomial time algorithm approximating {\sc Min} $k$-{\sc Dom Set} for  {split (chordal)} graphs within a factor of $(1-\epsilon) \ln n$, unless $P=NP$.
\end{corollary}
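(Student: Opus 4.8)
The plan is to follow the reduction in the proof of Theorem~\ref{thm8} almost verbatim, changing only the degree-padding gadget so that it preserves the split (respectively chordal) structure, and then re-running the same approximation-preserving inequality chain. I would reduce from {\sc Min Dom Set} restricted to split (chordal) graphs, which by Theorem~\ref{thmepdom} cannot be approximated within a factor of $(1-\epsilon)\ln n$ unless $P=NP$. Let $G$ be a split (chordal) instance. As in Theorem~\ref{thm8}, I may assume without loss of generality that $n+k-1\leq n^{1+\epsilon}$ and that $\gamma(G)$ is large enough that the additive term arising from the gadget is negligible; otherwise a brute-force search solves the instance in polynomial time.

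The key change is the gadget. Instead of the two separated sets $K_1,K_2$ used to keep the graph bipartite, I would attach a single clique $K$ of $k-1$ new vertices and make every vertex of $K$ adjacent to every vertex of $G$ (so $K$ is a universal clique), obtaining $G'$; since this corollary imposes no degree bound, the high degree of the gadget vertices is harmless. Two directions then mirror Theorem~\ref{thm8}: if $D$ dominates $G$ then $D\cup K$ is a $k$-dominating set of $G'$, since every $v\in V(G)\setminus(D\cup K)$ keeps its dominator in $D$ and gains the $k-1$ neighbors in $K$; hence $\gamma_k(G')\leq\gamma(G)+(k-1)$. Conversely, if $D'$ is a $k$-dominating set of $G'$, then $D:=D'\cap V(G)$ dominates $G$, because a vertex $v\in V(G)\setminus D$ lies outside $D'$ and has at least $k$ neighbors in $D'$, of which at most $k-1$ belong to $K$, forcing a neighbor in $D$; moreover $|D|\leq|D'|$.

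With these two inequalities in hand, the remainder is the algebraic chain of Theorem~\ref{thm8} with $2k-2$ replaced by $k-1$: a hypothetical $(1-\epsilon)\ln|V(G')|$ approximation of {\sc Min} $k$-{\sc Dom Set} on $G'$ would yield a $(1-\epsilon')\ln n$ approximation of {\sc Min Dom Set} on $G$ for some $\epsilon'>0$, contradicting Theorem~\ref{thmepdom}.

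The step requiring genuine (if short) verification, and the one I would treat as the crux, is that $G'$ stays in the right class. For split graphs this is immediate: if $V(G)=I\cup C$ with $I$ independent and $C$ a clique, then $C\cup K$ is a clique (both $C$ and $K$ are cliques and every vertex of $K$ is adjacent to every vertex of $C$) while $I$ remains independent, so $(I,\,C\cup K)$ is a split partition of $G'$. For chordal graphs I would argue via a perfect elimination ordering: appending the vertices of $K$ to any perfect elimination ordering of $G$ yields a perfect elimination ordering of $G'$, since adding a universal vertex to a chordal graph keeps the later-neighborhood of every earlier simplicial vertex a clique. This confirms that $G'$ is split (chordal) and closes the reduction.
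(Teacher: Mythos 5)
Your proposal is correct and takes essentially the same route as the paper: the paper's proof of Corollary~\ref{col1} is a one-sentence remark that the padding construction behind Theorem~\ref{kdomeps} (attaching $k-1$ new vertices joined to all of $V(G)$ so as to reduce domination to $k$-domination) preserves the split/chordal property, which is exactly the reduction you carry out, including the standard brute-force handling of instances with small $\gamma(G)$. Your explicit choice of a universal \emph{clique} gadget and your verification of class preservation merely fill in the details the paper leaves to the reader (and the clique choice is in fact the right one, since an independent universal gadget of size at least two would destroy the split structure).
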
 
\subsection{Inapproximability of {\sc Min} $\alpha$-{\sc Dom Set} in bipartite graphs and chordal  graphs}
Here we  present some results on inapproximability of {\sc Min} $\alpha$-{\sc Dom Set} when the problem  {is} restricted to bipartite graphs and chordal graphs. 

{In the proof of Theorem} \ref{thmalpha}, Cicalese  et al.     first showed that for every integer $B > 0$ and for every $\epsilon > 0$, there is no polynomial time algorithm approximating  {the} domination  {number of} an input graph $G$ without isolated vertices  {and} satisfying $\gamma (G) \geq B\Delta(G)$ within a factor of $(\frac{1}{2}-\epsilon)\ln n$, unless NP$\subseteq$ DTIME$(n^{O(\log\log n)})$.  {Looking} at the proof of this result and using Theorem \ref{thmepdom}, it is easy to see that the result holds for bipartite graphs and {chordal graphs, and under the weaker assumption P$\neq$ NP.} Hence we have:
\begin{lemma}
\label{lemlem}
For every integer $B > 0$ and for every $\epsilon > 0$, there is no polynomial time algorithm approximating domination on input bipartite graphs $G$ without isolated vertices satisfying $\gamma (G) \geq B\Delta(G)$ within a factor of $(\frac{1}{2}-\epsilon)\ln n$, unless $P=NP$. {The} same result holds for chordal graphs.
\end{lemma}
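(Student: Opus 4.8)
The plan is to derive Lemma~\ref{lemlem} from the inapproximability of \emph{ordinary} domination on bipartite (respectively chordal) graphs established in Theorem~\ref{thmepdom}, by a reduction that (i) stays inside the relevant graph class, (ii) forces the degree constraint $\gamma(G)\geq B\Delta(G)$, and (iii) loses only a factor of $2$ in the approximation ratio, thereby turning the $(1-\epsilon)\ln n$ hardness of Theorem~\ref{thmepdom} into the claimed $(\frac{1}{2}-\epsilon)\ln n$ hardness under the single assumption $P=NP$. The only place where the stronger hypothesis $\mathrm{NP}\subseteq\mathrm{DTIME}(n^{O(\log\log n)})$ enters the original argument of Cicalese \etal is in the appeal to Feige's hardness of {\sc Set Cover}; replacing that appeal by the Dinur \etal bound already packaged in Theorem~\ref{thmepdom} is exactly what upgrades the assumption to $P=NP$.

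The reduction I would use is simple disjoint replication. Given a bipartite (chordal) instance $H$ on $n_0$ vertices with no isolated vertices, set $t=\lceil B\Delta(H)\rceil$ and let $G$ be the disjoint union of $t$ copies of $H$. Disjoint union preserves bipartiteness and chordality and creates no isolated vertices, so $G$ lies in the required class; moreover $\Delta(G)=\Delta(H)$ and $\gamma(G)=t\,\gamma(H)$. Since $\gamma(H)\geq 1$ we get $\gamma(G)=t\,\gamma(H)\geq t\geq B\Delta(H)=B\Delta(G)$, so the constraint $\gamma(G)\geq B\Delta(G)$ holds. The key point is that this avoids computing the (NP-hard) value $\gamma(H)$: over-estimating $t$ by $\lceil B\Delta(H)\rceil$ suffices, and since $B$ is a constant the graph $G$ has $N:=t\,n_0\leq (Bn_0+1)n_0=O(n_0^2)$ vertices, so the construction is polynomial.

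For the ratio analysis, suppose some polynomial-time algorithm $A$ dominated every such $G$ within factor $(\frac{1}{2}-\epsilon)\ln N$. Because the copies are disconnected, the restriction of any dominating set of $G$ to a single copy is again a dominating set of that copy, so by averaging the cheapest copy yields a dominating set of $H$ of size at most $\frac{1}{t}|A(G)|\leq(\frac{1}{2}-\epsilon)(\ln N)\,\gamma(H)$. Now $\ln N\leq 2\ln n_0+\ln B+o(1)=(2+o(1))\ln n_0$, so this produces a dominating set of $H$ of size at most $(1-2\epsilon+o(1))(\ln n_0)\,\gamma(H)$, i.e.\ a factor-$(1-\epsilon)\ln n_0$ approximation of $\gamma(H)$ for all sufficiently large $n_0$. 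By Theorem~\ref{thmepdom} (applied to the bipartite, resp.\ chordal, case, whose hard instances may be taken without isolated vertices) this forces $P=NP$, proving the lemma; small instances are dispatched by brute force in the usual way.

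The step I expect to be the crux is the accounting that pins the loss at exactly a factor of $2$: the very act of enforcing $\gamma(G)\geq B\Delta(G)$ inflates the vertex count quadratically in the worst case $\Delta(H)=\Theta(n_0)$, where $t=\Theta(n_0)$ and $N=\Theta(n_0^2)$, and this is precisely why Theorem~\ref{thmepdom}'s factor $(1-\epsilon)\ln n$ degrades to $(\frac{1}{2}-\epsilon)\ln n$ under this reduction. One must check that the $o(1)$ error terms coming from $\ln B$ and from the ceiling in $t$ never consume the fixed gap $\epsilon$, and that the averaging step really returns a dominating set of one concrete copy rather than a fractional object; both are routine once the quadratic size bound $N=O(n_0^2)$ is in hand.
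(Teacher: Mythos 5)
Your proposal is correct and is, in substance, the same argument the paper relies on: the paper proves Lemma~\ref{lemlem} only by pointing to the construction of Cicalese et~al.\ and noting that it preserves bipartiteness/chordality and can be combined with Theorem~\ref{thmepdom}, and the construction being pointed to is precisely the disjoint-replication padding you describe, in which $t=\Theta(B\Delta(H))$ copies enforce $\gamma(G)\geq B\Delta(G)$, the resulting quadratic blowup $N=O(n_0^2)$ accounts for the degradation from $(1-\epsilon)\ln n$ to $(\tfrac{1}{2}-\epsilon)\ln n$, and the Dinur--Steurer-based Theorem~\ref{thmepdom} replaces Feige's bound to weaken the assumption to $P\neq NP$. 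The only difference is that your write-up is self-contained (including the averaging step over copies and the bookkeeping of the $o(1)$ terms), whereas the paper delegates these details to the cited reference.
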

With some modifications in   {the} proof of Theorem \ref{thmalpha} and using Lemma \ref{lemlem} we  obtain the following  {result. }
\begin{theorem} 
\label{thm9}
For every $\alpha\in(0,1)$ and  $\epsilon >0$, there is no polynomial time algorithms  approximating {\sc Min} $\alpha$-{\sc Dom Set} for bipartite graphs within a factor of $(\frac{1}{2}-\epsilon) \ln n$, unless $P=NP$.
\end{theorem}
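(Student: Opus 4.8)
The plan is to reuse the reduction underlying Theorem~\ref{thmalpha}, but to feed it the \emph{bipartite} domination instances supplied by Lemma~\ref{lemlem} and to lay out every gadget so that bipartiteness is preserved; the inapproximability constant $\tfrac12$ is then inherited directly from Lemma~\ref{lemlem}. Fix $\alpha\in(0,1)$ and $\epsilon>0$. By Lemma~\ref{lemlem}, for every integer $B>0$ it is NP-hard to approximate domination within $(\tfrac12-\epsilon)\ln n$ on bipartite graphs $G=(V_1,V_2,E)$ with no isolated vertices and $\gamma(G)\geq B\Delta(G)$; I would reduce such a $G$ to an instance of {\sc Min} $\alpha$-{\sc Dom Set} on a bipartite graph $G'$, fixing $B$ large only at the very end.

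First I would design $G'$ so that, under $\alpha$-domination, each original vertex behaves like a vertex that must be dominated exactly once. For a vertex $v$ of degree $d_v$ choose an integer $m_v\geq 0$ in the interval $\left[\frac{\alpha d_v-1}{1-\alpha},\ \frac{\alpha d_v}{1-\alpha}\right)$; this interval has length $\frac{1}{1-\alpha}>1$ and hence contains an integer, and any such $m_v$ satisfies $\lceil\alpha(d_v+m_v)\rceil=m_v+1$. Attaching $m_v$ auxiliary neighbours to $v$ therefore makes $v$ need, on top of those $m_v$ auxiliary dominators, exactly one \emph{original} neighbour in the solution. To keep $G'$ bipartite I use two auxiliary sets $A$ and $B$ of common size $M:=\max_v m_v=O(\Delta)$, placing $A$ on the side opposite $V_1$ and $B$ on the side opposite $V_2$, and join each $v\in V_1$ (resp.\ $V_2$) to $m_v$ vertices of $A$ (resp.\ $B$). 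Finally I hang one private degree-one leaf on each auxiliary vertex: a leaf has threshold $\lceil\alpha\rceil=1$, so whenever its auxiliary neighbour is absent from a solution the leaf itself must be chosen, which makes each of the $2M$ auxiliary gadgets contribute at least one vertex to \emph{every} $\alpha$-dominating set.

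The heart of the argument is the identity $\gamma_{\alpha}(G')=\gamma(G)+2M$ together with an offset-preserving transfer of solutions in both directions. In one direction, if $D$ dominates $G$ then $D\cup A\cup B$ is $\alpha$-dominating in $G'$ (each original $v\notin D$ collects its $m_v$ auxiliary dominators from $A\cup B$ plus one original dominator from $D$, and every leaf is dominated by its parent), giving $\gamma_{\alpha}(G')\leq\gamma(G)+2M$. In the other direction, from any $\alpha$-dominating set $D'$ the set $D:=D'\cap V(G)$ dominates $G$ — an undominated original $v$ would have only its at-most-$m_v$ auxiliary neighbours in $D'$, contradicting its requirement of $m_v+1$ dominators — while the leaf gadgets force $|D'\setminus V(G)|\geq 2M$, so $|D|\leq|D'|-2M$; taking $D'$ optimal yields $\gamma_{\alpha}(G')\geq\gamma(G)+2M$ and hence equality.

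It then remains to push an approximation through this fixed offset. A $(\tfrac12-\epsilon)\ln N$-approximate $\alpha$-dominating set $D'$ of $G'$, where $N=|V(G')|=n+O(M)=n+O(\Delta)$, yields a dominating set $D$ of $G$ with $|D|\leq|D'|-2M\leq(\tfrac12-\epsilon)(\ln N)(\gamma(G)+2M)-2M$. Under the hypothesis $\gamma(G)\geq B\Delta(G)$ one has $M\leq c\,\Delta\leq\frac{c}{B}\gamma(G)$ for a constant $c=c(\alpha)$, so $\ln N=\ln n+O(1/B)$ and $\gamma(G)+2M\leq(1+O(1/B))\gamma(G)$; choosing $B$ large enough (depending on $\alpha,\epsilon$) shrinks both blow-ups below any prescribed threshold, so $|D|\leq(\tfrac12-\epsilon')(\ln n)\gamma(G)$ for some $\epsilon'\in(0,\epsilon)$, contradicting Lemma~\ref{lemlem}. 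I expect the main obstacle to be satisfying three requirements simultaneously — the per-vertex threshold $\lceil\alpha(d_v+m_v)\rceil=m_v+1$, the leaf-based forcing, and the bipartite layout of $A$, $B$ and their leaves — and then the bookkeeping that certifies that both the additive offset $2M$ and the vertex increase $N-n$ are $O(\Delta)$, hence absorbed by $\gamma(G)\geq B\Delta(G)$ without eroding the constant $\tfrac12$. The bipartiteness constraint is precisely the ``modification'' of the proof of Theorem~\ref{thmalpha}, and it is what dictates splitting the auxiliary vertices into the two sets $A$ and $B$.
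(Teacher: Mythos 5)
Your proposal follows essentially the same route as the paper: reduce from bipartite domination instances with $\gamma(G)\geq B\Delta(G)$ supplied by Lemma~\ref{lemlem}, attach to each original vertex just enough auxiliary neighbours that its $\alpha$-threshold becomes ``all auxiliary neighbours plus one original dominator'', split the auxiliary vertices into two sets placed on opposite sides to preserve bipartiteness, and absorb the $O(\Delta)$ additive offset and vertex blow-up via $\gamma(G)\geq B\Delta(G)$ before invoking the lemma. The only deviation is your leaf gadget forcing the exact identity $\gamma_{\alpha}(G')=\gamma(G)+2M$; this is correct but unnecessary, since the paper gets by with the one-sided bound $\gamma_{\alpha}(G')\leq\gamma(G)+2k$ together with $|S|\leq|S'|$.
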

\begin{proof}
Let $0<\alpha<1$ and $\epsilon \in (0, 1)$.    We define $N = \left\lceil\frac{\alpha}{1-\alpha}\right\rceil$,   $B=\left\lceil\frac{2N}{\epsilon}\right\rceil$ and     {$k=N\Delta(G)$}. The reduction is  {done from domination in bipartite}
graphs $G$ without isolated vertices and with $n$ vertices such that $1+2N\leq n^{\epsilon}$ and $\gamma(G)\geq B\Delta(G).$

Using  {the} same transformation as in  {the} proof of Theorem \ref{thmalpha}, we transform bipartite graph $G=(V_1, V_2, E)$ into a graph $G'$ as follows: {consider} two sets $K_1$ and $K_2$  each  {with} $N\Delta(G)$ extra vertices. We assume that sets $K_1$, $K_2$ and $V(G)$ are pairwise disjoint. Join each vertex $v\in V_1$ to precisely $k_v$ vertices of $K_2$  and join each vertex $v\in V_2$ to precisely $k_v$ vertices of $K_1$  {where}, 
\begin{align}
\nonumber
&k_v=\left\{\begin{array}{cc}
\left\lceil\frac{\alpha d_G(v)-1}{1-\alpha}\right\rceil& \text{if~} d_G(v)\geq2\\
0& \text{if~} d_G(v)=1.
\end{array}\right.&
\end{align}

Clearly  $0\leq k_v\leq N\Delta(G)$ when $d_G(v)\geq 2$ and $k_v=0<N\Delta(G)$ when $d_G(v)=1$.  {Thus,} the above transformation can be done. 

It is easy to prove that $\gamma(G)\leq\gamma_{\alpha}(G')\leq \gamma(G)+2k$ (see \cite{cicalese2013approximability}).  {Suppose that  there exists} a polynomial time algorithm $A$ that computes an $\alpha$-dominating set $S'$ for $G'$ such that $|S'|\leq (\frac{1}{2}-\epsilon)\ln (|V(G')|)\gamma_{\alpha}(G')${.} It is clear that $|V(G')|=n+2k=n+2N\Delta(G)\leq n(1+2N)\leq n^{1+\epsilon}$.

Using the inequality $\gamma(G)\geq\Delta(G)B$, it is easy to see that $2k\leq \epsilon\gamma(G)$. Furthermore, it is straightforward that   $S = S' \cap V(G)$ is a dominating set in $G$ (see \cite{cicalese2013approximability}). Hence, we have 
\begin{align}
\nonumber
|S|&\leq |S'|&\\\nonumber
&\leq  (\frac{1}{2}-\epsilon)(\ln |V(G')|) \gamma_{\alpha}(G')\\\nonumber
&\leq(\frac{1}{2}-\epsilon)(\ln n^{1+\epsilon})  (\gamma(G)+2k)\\\nonumber
&\leq(\frac{1}{2}-\epsilon)(\ln n^{1+\epsilon})  (\gamma(G)+\epsilon\gamma(G))\\\nonumber
&=(\frac{1}{2}-\epsilon)(1+\epsilon)^2(\ln n) \gamma(G)\\\nonumber
&=(\frac{1}{2}-\epsilon')(\ln n)\gamma(G),
\end{align}
where  $\epsilon':=\epsilon^2(\epsilon + 3/2) \in (0, \frac{1}{2}).$ Thus, using algorithm $A$, we find a set $S$ in polynomial time that approximates {\sc Min Dom Set} within a factor of $(\frac{1}{2}-\epsilon') \ln n$ {, which implies that $P=NP$}. Hence, there is no polynomial time algorithm approximating   {the problem} {\sc Min} $\alpha$-{\sc Dom Set}  for bipartite graphs  within a factor of $(\frac{1}{2} -\epsilon) \ln n$, unless   $P=NP$.
\end{proof}
Using  {the} same construction of  {the} proof of Theorem \ref{thmalpha} and using Lemma \ref{lemlem} we can easily prove the following result:
 \begin{theorem} 
 \label{thm10}
For every $\alpha\in(0,1)$,  integer $B>0$ and  $\epsilon >0$, there is no polynomial time  {algorithm} approximating  {the} {\sc Min} $\alpha$-{\sc Dom Set}  {problem} for chordal graphs within a factor of $(\frac{1}{2}-\epsilon) \ln n$, unless $P=NP$.
\end{theorem}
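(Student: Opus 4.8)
The plan is to reduce from domination on chordal graphs, mirroring the proof of Theorem~\ref{thm9} with ``bipartite'' replaced by ``chordal'' throughout. First I would fix $\alpha\in(0,1)$ and $\epsilon>0$, set $N=\lceil \alpha/(1-\alpha)\rceil$, $B=\lceil 2N/\epsilon\rceil$ and $k=N\Delta(G)$, and invoke Lemma~\ref{lemlem} to supply the source of hardness: chordal graphs $G$ without isolated vertices satisfying $\gamma(G)\geq B\Delta(G)$ and (after discarding the small instances, which are solvable by brute force) $1+2N\leq n^{\epsilon}$, for which domination cannot be approximated within $(\tfrac12-\epsilon)\ln n$ unless $P=NP$.

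Next I would apply the degree-boosting construction of Theorem~\ref{thmalpha} to such a $G$: introduce a shared set (or two shared sets) of $\Theta(\Delta(G))$ new vertices and join each $v\in V(G)$ to exactly $k_v$ of them, where
\[
k_v=\left\lceil\frac{\alpha d_G(v)-1}{1-\alpha}\right\rceil \text{ if } d_G(v)\geq 2, \qquad k_v=0 \text{ if } d_G(v)=1 .
\]
The same counting as before gives $\gamma(G)\leq\gamma_\alpha(G')\leq\gamma(G)+2k$ and $|V(G')|=n+2k\leq n(1+2N)\leq n^{1+\epsilon}$; the choice of $k_v$ guarantees that a vertex is $\alpha$-dominated as soon as it keeps a single original dominator, so $S\mapsto S\cap V(G)$ sends an $\alpha$-dominating set of $G'$ to a dominating set of $G$, and $D\mapsto D\cup(\text{all new vertices})$ goes the other way. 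From here the inequality chain of Theorem~\ref{thm9} is reproduced verbatim: a hypothetical $(\tfrac12-\epsilon)\ln n$-approximation for \textsc{Min} $\alpha$-\textsc{Dom Set} on $G'$ would, using $2k\leq\epsilon\gamma(G)$, yield a $(\tfrac12-\epsilon')\ln n$-approximation for domination on $G$ with $\epsilon'=\epsilon^2(\epsilon+3/2)>0$, contradicting Lemma~\ref{lemlem} unless $P=NP$.

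The only genuinely new ingredient, and the part I expect to be the main obstacle, is verifying that $G'$ is chordal. Here the chordal case is strictly harder than both the bipartite case of Theorem~\ref{thm9} and the chordal $k$-domination case of Corollary~\ref{col1}. In Theorem~\ref{thm9} the shared sets are independent and joined across the bipartition, so bipartiteness is automatic; in Corollary~\ref{col1} the constant $k$ lets one attach $k-1$ pendant (hence simplicial) vertices per vertex, which trivially preserves chordality. Neither device is available here: $k_v$ grows with $d_G(v)$, so per-vertex pendant gadgets would create $\sum_v k_v=\Theta(|E|)$ new vertices, too many to keep $|V(G')|\leq n^{1+\epsilon}$, which forces the use of shared vertices; but a single shared vertex joined to two non-adjacent original vertices at the ends of an induced path of $G$ immediately produces a long induced cycle.

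The way I would try to get around this is to organize the attachments so that $G'$ still admits a perfect elimination ordering: make the shared set a clique and choose, for each shared vertex $u$, its original neighbourhood $N(u)\cap V(G)$ and the order of attachments to be compatible with a perfect elimination ordering of $G$, so that when the original vertices are eliminated in that order their surviving neighbours (old clique neighbours together with attached clique vertices) remain a clique. I expect this to go through cleanly only for structured chordal instances, most plausibly the split graphs furnished by Theorem~\ref{thmepdom}, where the shared clique can be fused onto the clique side while each independent-set vertex retains exactly one original dominator. Pushing it through for arbitrary chordal $G$ while simultaneously respecting the $O(\Delta(G))$ bound on the number of added vertices and the lower-bound requirement that clique vertices not be over-dominated is the delicate point that the statement's ``easily'' leaves implicit.
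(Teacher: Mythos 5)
Your proposal follows essentially the same route as the paper: the paper's proof of Theorem~\ref{thm10} is a two-line remark saying the argument is the same as that of Theorem~\ref{thmalpha} (i.e., the reduction you describe, sourced from Lemma~\ref{lemlem} for chordal graphs), together with the bare assertion that ``the transformation of chordal graph $G$ to $G'$ can be done such that $G'$ be a chordal graph.'' The chordality-preservation step you single out as the delicate point is exactly what the paper asserts without detail, so your treatment is, if anything, more explicit about where the real work lies.
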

\begin{proof}
The proof  is  {the} same as  {the} proof of Theorem \ref{thmalpha}. Note that  the transformation of chordal graph $G$ to $G'$ can be done such that $G'$ be a chordal graph. 
\end{proof}
%%%%%%%%%%%%%%%%%%%%%%%%%
%%%%%%%%%%%%%%%%%%%%%%%%%
%%%%%%%%%%%%%%%%%%%%%%%%%%%%%%%%%
%%%%%%%%%%%%%%%%%%%%%%%%%%%%%%%%%
\subsection{Approximation algorithm for  {\sc Min} $k$-{\sc Dom Set} and {\sc Min} $\alpha$-{\sc Dom Set} in $p$-claw free graphs}
Here we  give a simple polynomial time approximation algorithm that gets a $p$-claw free graph $G$  {as  input and  for a constant $k$,    computes} a $k$-dominating set in $G$ with approximation ratio $\max\{p-1,k\}$.  We also present similar results in the case of {\sc Min} $\alpha$-{\sc Dom Set}. 

{First, we prove a relation} between an optimal  {$k$-dominating set} and maximal independent set,  {denoted by MIS,}  of a $p$-claw free graph in analogy with Lemma 7 in \cite{klasing2004hardness}. 
\begin{lemma}
\label{lemlem2}
Let $D$ be any optimal $k$-dominating set  of $G$ and $I$ be  {any MIS} of $G$, where $G$ is a $p$-claw free graph and $k$ is a constant.  Then,  {$|{D}|\geq \min\left\{ \frac{k}{p-1},1\right\}\cdot|I|.$}
\end{lemma}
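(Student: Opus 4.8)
The plan is to relate $|D|$ to $|I|$ by double-counting the edges running between $D$ and $I$, using the $p$-claw-free hypothesis to cap the number of neighbors each vertex of $D$ can have inside the independent set $I$. First I would record the one structural fact I need: since $G$ contains no induced $K_{1,p}$, no neighborhood $N(u)$ can contain an independent set of size $p$. Because $I$ is independent, $N(u)\cap I$ is independent, and therefore every vertex $u$ satisfies $|N(u)\cap I|\leq p-1$.

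Next I would split $D$ along the intersection $D\cap I$. Writing $a=|D\cap I|$ and $b=|D\setminus I|$, we have $|D|=a+b$ and $|I|=|D\cap I|+|I\setminus D|=a+|I\setminus D|$, so everything reduces to bounding $|I\setminus D|$. The core step is to count the edges joining $D\setminus I$ to $I\setminus D$ in two ways. On one hand, each $v\in I\setminus D$ lies outside the $k$-dominating set $D$, hence has at least $k$ neighbors in $D$; since $v\in I$ and $I$ is independent, all of these neighbors avoid $I$ and so lie in $D\setminus I$, producing at least $k\,|I\setminus D|$ such edges. On the other hand, by the claw-free bound each $u\in D\setminus I$ has at most $p-1$ neighbors in $I$, hence at most $p-1$ in $I\setminus D$, producing at most $(p-1)b$ edges. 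Comparing the two counts gives $k\,|I\setminus D|\leq (p-1)b$, so $|I|\leq a+\tfrac{p-1}{k}b$.

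I would then close the estimate with a short case analysis on the size of $k$ relative to $p-1$. If $k\geq p-1$, then $\min\{k/(p-1),1\}=1$, and since $\tfrac{p-1}{k}\leq 1$ we get $|I|\leq a+b=|D|$, as required. If $k<p-1$, then $\min\{k/(p-1),1\}=k/(p-1)$, and I would verify $a+b\geq \tfrac{k}{p-1}\big(a+\tfrac{p-1}{k}b\big)=\tfrac{k}{p-1}a+b$, which simplifies to $a\big(1-\tfrac{k}{p-1}\big)\geq 0$, true because $k<p-1$. In either case $|D|\geq \min\{k/(p-1),1\}\,|I|$.

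The step I expect to be the real subtlety is the bookkeeping of \emph{which} vertices of $D$ actually dominate $I\setminus D$: only those in $D\setminus I$ do, while the vertices of $D\cap I$ contribute nothing (they have no neighbors in $I$). Carrying out the edge count against $b=|D\setminus I|$ rather than against all of $|D|$ is exactly what sharpens the naive bound $\tfrac{k}{k+p-1}|I|$ into the claimed $\min\{k/(p-1),1\}|I|$; keeping that split straight through the case analysis is the only place where a careless argument would lose the constant.
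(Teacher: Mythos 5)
Your proof is correct and follows essentially the same route as the paper's: both arguments double-count the edges between $D$ and $I$, use $p$-claw-freeness to bound $|N(u)\cap I|$ by $p-1$, arrive at the same key inequality $k\,|I\setminus D|\leq (p-1)\,|D\setminus I|$ (equivalently $|D|\geq \tfrac{k}{p-1}|I|+\tfrac{p-1-k}{p-1}|D\cap I|$), and finish with the identical case split on $k$ versus $p-1$. Your version merely streamlines the paper's closed-neighborhood bookkeeping by restricting the edge count to $D\setminus I$ and $I\setminus D$ from the outset.
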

\begin{proof}
 For all $u\in I$ and for all $v\in {D}$, let 
$x_u=|{D}\cap N[u]|, y_v=|I\cap N[v]|, x'_u=|{D}\cap N(u)|$ and $y'_v=|I\cap N(v)|$,
where $N[v]=N(v)\cup\{v\}$. Since ${D}$ is  {a} $k$-dominating set in $G$,  $I$ is {a} maximal independent set  and $G$ is $p$-claw free,  we have:
\begin{itemize} 
\item If $u\in I-{D}$, then $x_u\geq k$ and if $u\in I\cap{D}$ then $x_u\geq 1$. Thus 
\begin{flalign} 
\label{b1}
&\sum_{u\in I}{x_u}\geq k|I-{D}|+|I\cap {D}|&
\end{flalign}
\item If $v\in {D}-I$ then $y_v\leq p-1$ and if $v\in I\cap{D}$, then $y_v=1$. Thus 
\begin{flalign} 
\label{b2}
&\sum_{v\in {D}}{y_v}\leq (p-1)|{D}-I|+|I\cap {D}|&
\end{flalign}
\item  If $u\in I-{D}$, then $x_u=x'_u$ and if $u\in I\cap{D}$, then $x_u=x'_u+1$. Thus 
\begin{flalign} 
\label{b3}
&\sum_{u\in I}{x_u}=\sum_{u\in I}{x'_u}+|I\cap {D}|&
\end{flalign}
\item  If $v\in {D}-I$ then $y_v=y'_v$ and if $v\in I\cap{D}$, then $y_v=y'_v+1$. Thus 
\begin{flalign} 
\label{b4}
&\sum_{v\in {D}}{y_v}=\sum_{v\in {D}}{y'_v}+|I\cap {D}|&
\end{flalign}
\end{itemize}
By considering the edges between sets $I$ and ${D}$, it is easy to see that $\sum_{v\in {D}}{y'_v}=\sum_{u\in I}{x'_u}$. Consequently,  $\sum_{v\in {D}}{y_v}=\sum_{u\in I}{x_u}$.  Hence,  according to   {equations} (\ref{b1}) and  (\ref{b2}), we have
$$(p-1)|{D}-I|+|{D}\cap I|\geq k|I-{D}|+|{D}\cap I|.$$
With replacing $|{D}-I|$ by $|{D}|-|{D}\cap I|$ and $|I-{D}|$ by $|I|-|I\cap{D}|$ in  above formula, we have
$$|{D}|\geq\frac{k}{p-1}|I|+\frac{p-k-1}{p-1}|{D}\cap I|.$$
Now if $p\geq k+1$, then $|{D}|\geq \frac{k}{p-1}|I|$ and if $p< k+1$, since $|{D}\cap I|\leq |I|$,  then
$$|{D}|\geq\frac{k}{p-1}|I|+\frac{p-k-1}{p-1}|{D}\cap I|\geq\frac{k}{p-1}|I|+\frac{p-k-1}{p-1}|I|=|I|.$$
Hence, $|{D}|\geq \min\left\{ \frac{k}{p-1},1\right\}\cdot|I|.$
\end{proof}
Now we  present an approximation algorithm. 

\begin{algorithm}[H]
\SetKwInOut{Input}{Input}
\Input{A $p$-claw free graph $G$.} 
\SetKwInOut{Output}{Output}
% \SetAlgoLined
\Output{A $k$-dominating set $D$ in $G$.}
\For {$i:=1$ to $k$}{Construct  {an} MIS $I_i$ in $G-I_1 \cup\ldots\cup I_{i-1}$\;
}
$D := I_1 \cup\ldots\cup I_{k}$\;

\caption{\sc{ {Approximate-$k$-Dom-Claw$(G)$}  }}
 \end{algorithm}
\begin{theorem}
\label{thmalg1}
{The} algorithm {\sc Approximate-$k$-Dom-Claw$(G)$} computes  {in polynomial time}  a $k$-dominating set $D$ in {a} $p$-claw free graph $G$ such that  $|D|\leq \max\left\{p-1,k\right\}\cdot\gamma_{k}(G)$. 
\end{theorem}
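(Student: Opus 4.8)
The plan is to establish three things: that the algorithm runs in polynomial time, that its output $D$ is genuinely a $k$-dominating set, and that $|D|\leq\max\{p-1,k\}\cdot\gamma_k(G)$. Polynomial running time is immediate, since a maximal independent set can be produced greedily in polynomial time, and because $k$ is a constant the loop is executed a constant number of times. For the $k$-domination property I would argue from the maximality of each $I_i$. Let $v$ be a vertex outside $D=I_1\cup\cdots\cup I_k$. Then $v\notin I_j$ for every $j$, so for each fixed $i$ the vertex $v$ still belongs to the graph $G-(I_1\cup\cdots\cup I_{i-1})$ in which $I_i$ was selected as a maximal independent set; since $v\notin I_i$, maximality forces $v$ to have a neighbor in $I_i$. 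As the sets $I_1,\ldots,I_k$ are pairwise disjoint, these $k$ neighbors are distinct, so $v$ has at least $k$ neighbors in $D$ and $D$ is $k$-dominating.

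For the approximation ratio the idea is to bound each $|I_i|$ separately against $\gamma_k(G)$. Each $I_i$ is independent in the induced subgraph $G-(I_1\cup\cdots\cup I_{i-1})$, hence independent in $G$; I would extend it to a maximal independent set $J_i$ of $G$, so that $|I_i|\leq|J_i|$. Applying Lemma~\ref{lemlem2} to an optimal $k$-dominating set $D^{*}$ and to the MIS $J_i$ of $G$ gives $\gamma_k(G)=|D^{*}|\geq\min\{\frac{k}{p-1},1\}\,|J_i|\geq\min\{\frac{k}{p-1},1\}\,|I_i|$, which rearranges to $|I_i|\leq\max\{\frac{p-1}{k},1\}\cdot\gamma_k(G)$ for every $i$.

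Summing these $k$ inequalities and using the disjointness of the $I_i$ then yields
$$|D|=\sum_{i=1}^{k}|I_i|\leq k\cdot\max\left\{\frac{p-1}{k},1\right\}\cdot\gamma_k(G)=\max\{p-1,k\}\cdot\gamma_k(G),$$
which is exactly the claimed bound. I expect the main obstacle to be this approximation step, for a subtle reason: the sets $I_i$ with $i\geq2$ are only maximal in a proper subgraph of $G$, so Lemma~\ref{lemlem2} does not apply to them directly. The device that resolves this is to observe that each $I_i$ is nonetheless an independent set of $G$ and can be completed to a maximal independent set $J_i$ of $G$ without shrinking, after which the lemma applies verbatim; the elementary identity $k\cdot\max\{(p-1)/k,1\}=\max\{p-1,k\}$ then converts the per-layer bound into the stated overall ratio.
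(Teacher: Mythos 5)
Your proof is correct and follows essentially the same route as the paper: polynomial time from greedy MIS computation, the $k$-domination property from maximality of each $I_i$ in its residual graph together with disjointness of the layers, and the ratio from applying Lemma~\ref{lemlem2} to bound each $|I_i|$ by $\max\{(p-1)/k,1\}\cdot\gamma_k(G)$. You are in fact slightly more careful than the paper, which applies Lemma~\ref{lemlem2} to the sets $I_i$ directly even though for $i\geq 2$ they are maximal only in a proper subgraph of $G$; your device of extending $I_i$ to a maximal independent set $J_i$ of the whole graph $G$ supplies exactly the justification the paper leaves implicit.
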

\begin{proof}
It is well known  that a maximal independent set can be computed in polynomial time in any graph by a simple greedy algorithm. Therefore, the algorithm {computes}  the set $D$ in polynomial time. 

Now we prove that $D$ is a $k$-dominating set in $G$. Suppose that $v\not \in D $. So $v\not \in I_i$, for $1\leq i\leq k$. Since $I_i$ are   maximal independent {sets}, vertex $v$ has at least one neighbor in each set $I_i$ for each $1\leq i\leq k$. Thus  $D$ is a $k$-dominating set in $G$. 

 According to the construction of  $D$ and by applying Lemma \ref{lemlem2}, 
$$|D| =\sum_{i=1}^{k}|I_i|\leq \sum_{i=1}^{k} \max\left\{\frac{p-1}{k},1\right\}\gamma_k(G)=\max\left\{p-1,k\right\}\cdot\gamma_{k}(G).$$
\end{proof}
 {Let $k$ and $p$ be two positive integers with $p \geq k + 1$ and let $G$ be a $p$-claw free graph.} If $\Delta_G>\frac{e^{p-2}}{2}$,  where $e$ is  {the} Euler's number, then clearly $p-1<\ln (2\Delta_G)+1$. So in this case, the approximation ratio in algorithm  {\sc Approximate-$k$-Dom-Claw} is  {better} than the approximation { ratio given by}   Theorem \ref{thm1kapx}. Now suppose that $H$ is a graph with $\delta_H\geq 3$,  {where $\delta_H$ is the minimum degree of $H$},  and suppose that $G=L(H)$ is the line graph of graph $H$ ( {the line} graph $L(H)$ of graph $H$ is a graph such that  each vertex of $L(H)$ represents an edge of $H$;    {two vertices} of $L(H)$ are adjacent if and only if their corresponding edges share a common endpoint). It is easy to see that $G$ is a 3-claw free graph with $\Delta_G>\frac{e^{p-2}}{2}=\frac{e}{2}$.   {In the case $p \geq k + 1$, the algorithm is a 2-approximation for $k$-domination, where $k\in\{1,2\}$. In the case $k = 1$, the algorithm is equivalent to computing a maximal matching in $H$ (where $G = L(H)$), which is 
 a well-known 2-approximation algorithm for  the minimum edge domination problem in $H$. }

 % {In the  Figure }\ref{figpclaw},  {we presented a 3-claw free graph $G$ such that $\Delta_G=4>\frac{e^{p-2}}{2}=\frac{e}{2}$.}
%\begin{figure}[!ht]\centering\includegraphics[width=6cm]{pclaw.pdf}\caption{ {3-claw free graph $G$ suchthat}$\Delta_G=4>\frac{e^{p-2}}{2}=\frac{e}2}$.}\label{figpclaw}\end{figure}

 {In a similar} way, suppose that  $p< k+1$. If 
$\Delta_G>\frac{e^{k-1}}{2}$ then clearly $k<\ln (2\Delta_G)+1$. So in this case, the approximation  {ratio given by} algorithm  {\sc Approximate-$k$-Dom-Claw} is  {better}   than the approximation  {ratio given by} Theorem \ref{thm1kapx}. In this case, one can find  many $p$-claw free graphs with $\Delta_G>\frac{e^{k-1}}{2}$.

In the following, we show that 
Algorithm {\sc Approximate-$k$-Dom-Claw$(G)$} for $k:=\lceil\alpha\delta_G\rceil$  computes an $\alpha$-dominating set $D$ for a given $p$-claw free graph $G$ with approximation ratio $\max\{p-1,k\}$.
\begin{lemma}
\label{lemlem3}
Let ${D}$ be any optimal {$\alpha$-dominating} set  in $G$ and let $I$ be any maximal independent set  of $G$, where $G$ is a $p$-claw free graph and $\alpha$ is a constant such that $0<\alpha<1$.  Then
$|{D}|\geq \min\left\{ \frac{\lceil\alpha\delta_G\rceil}{p-1},1\right\}\cdot|I|.$
\end{lemma}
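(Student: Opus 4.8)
The plan is to mirror the proof of Lemma~\ref{lemlem2} almost verbatim, replacing the constant threshold $k$ for vertices outside the dominating set by the vertex-dependent threshold coming from the definition of $\alpha$-domination, and then bounding that threshold uniformly from below by $\lceil\alpha\delta_G\rceil$. Concretely, for all $u\in I$ and all $v\in D$ I would define the same four quantities $x_u=|D\cap N[u]|$, $y_v=|I\cap N[v]|$, $x'_u=|D\cap N(u)|$ and $y'_v=|I\cap N(v)|$. The only place where the hypothesis ``$D$ is a $k$-dominating set'' was used in Lemma~\ref{lemlem2} is in the first bullet, to conclude $x_u\ge k$ for $u\in I-D$; here I would instead use that $D$ is an $\alpha$-dominating set, so each $u\notin D$ has $|D\cap N(u)|\ge \alpha d_u$, hence $x_u=x'_u\ge\lceil\alpha d_u\rceil$. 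Since $d_u\ge\delta_G$ and the left-hand side is an integer, this gives $x_u\ge\lceil\alpha\delta_G\rceil$ for every $u\in I-D$.

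With that substitution the rest of the bookkeeping is identical. First I would record the same three relations that do not depend on which kind of domination we use: the $p$-claw-freeness of $G$ gives $y_v\le p-1$ for $v\in D-I$ and $y_v=1$ for $v\in I\cap D$ (so $\sum_{v\in D}y_v\le (p-1)|D-I|+|I\cap D|$), while the identities $\sum_{u\in I}x_u=\sum_{u\in I}x'_u+|I\cap D|$ and $\sum_{v\in D}y_v=\sum_{v\in D}y'_v+|I\cap D|$ hold exactly as before. Counting the edges between $I$ and $D$ yields $\sum_{v\in D}y'_v=\sum_{u\in I}x'_u$, and therefore $\sum_{v\in D}y_v=\sum_{u\in I}x_u$. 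Setting $k^\ast:=\lceil\alpha\delta_G\rceil$, the lower bound $\sum_{u\in I}x_u\ge k^\ast|I-D|+|I\cap D|$ combined with the $p$-claw upper bound on $\sum_{v\in D}y_v$ gives
\begin{equation}
\nonumber
(p-1)|D-I|+|D\cap I|\ge k^\ast|I-D|+|D\cap I|.
\end{equation}

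From here I would repeat the algebra of Lemma~\ref{lemlem2} with $k^\ast$ in place of $k$: substituting $|D-I|=|D|-|D\cap I|$ and $|I-D|=|I|-|I\cap D|$ and rearranging gives $|D|\ge\frac{k^\ast}{p-1}|I|+\frac{p-1-k^\ast}{p-1}|D\cap I|$. If $p-1\ge k^\ast$ the last term is nonnegative and we get $|D|\ge\frac{k^\ast}{p-1}|I|$; if $p-1<k^\ast$, using $|D\cap I|\le|I|$ the two terms combine to give $|D|\ge|I|$. In both cases $|D|\ge\min\{k^\ast/(p-1),1\}\cdot|I|$, which is the claim. I do not expect any genuine obstacle: the entire content lies in the observation that $\alpha$-domination forces $x_u\ge\lceil\alpha d_u\rceil\ge\lceil\alpha\delta_G\rceil$ for $u\notin D$, and once that uniform bound is in hand the argument is structurally identical to the $k$-domination case. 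The only mild subtlety worth stating carefully is the integrality step $\alpha d_u\le|D\cap N(u)|\Rightarrow\lceil\alpha d_u\rceil\le|D\cap N(u)|$ together with monotonicity of the ceiling in $d_u\ge\delta_G$, so that the per-vertex thresholds can all be replaced by the single quantity $\lceil\alpha\delta_G\rceil$.
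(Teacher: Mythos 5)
Your proof is correct and rests on exactly the same observation as the paper's: an $\alpha$-dominating set gives every vertex outside it at least $\lceil\alpha d_u\rceil\geq\lceil\alpha\delta_G\rceil$ dominators. The paper simply packages this as ``$D$ is a $k'$-dominating set for $k'=\lceil\alpha\delta_G\rceil$, hence $|D|\geq\gamma_{k'}(G)$'' and invokes Lemma~\ref{lemlem2} as a black box, whereas you re-run its counting argument inline; the two are mathematically identical.
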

\begin{proof}
Suppose that $k'=\lceil\alpha\delta_G\rceil$. Clearly $|{D}|\geq \gamma_{k'}(G)$. So by Lemma \ref{lemlem2}, we have 
$$|{D}|\geq \gamma_{k'}(G)\geq\min\left\{ \frac{\lceil\alpha\delta_G\rceil}{p-1},1\right\}\cdot|I|.
$$
\end{proof}
\begin{theorem}
Algorithm {\sc Approximate-$k$-Dom-Claw$(G)$} for $k:=\lceil\alpha\delta_G\rceil$ computes in polynomial time {an} $\alpha$-dominating set $D$ in  {a} $p$-claw free graph $G$ such that $|D|\leq \max\{p-1,k\}\cdot\gamma_{\alpha}(G)$.
\end{theorem}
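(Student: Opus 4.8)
The plan is to mirror the proof of Theorem~\ref{thmalg1}, replacing Lemma~\ref{lemlem2} by Lemma~\ref{lemlem3} and $\gamma_k(G)$ by $\gamma_\alpha(G)$ throughout, with $k=\lceil\alpha\delta_G\rceil$. Three claims must be established in turn: that the algorithm runs in polynomial time, that its output $D$ is an $\alpha$-dominating set, and that $|D|\leq\max\{p-1,k\}\cdot\gamma_\alpha(G)$. The running-time claim is immediate: since $0<\alpha<1$ we have $k=\lceil\alpha\delta_G\rceil\leq\delta_G\leq n$, so the loop executes a polynomial number of times, and each maximal independent set $I_i$ is produced by the same greedy procedure used in Theorem~\ref{thmalg1}.

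For the cardinality bound I would first record that any $\alpha$-dominating set is automatically a $k$-dominating set for $k=\lceil\alpha\delta_G\rceil$: a vertex outside such a set has at least $\alpha d_v\geq\alpha\delta_G$ neighbours in it, hence at least $\lceil\alpha\delta_G\rceil$ of them because the count is an integer. In particular $\gamma_k(G)\leq\gamma_\alpha(G)$. Applying Lemma~\ref{lemlem3} to each $I_i$ exactly as Lemma~\ref{lemlem2} is applied in the proof of Theorem~\ref{thmalg1} gives $|I_i|\leq\max\{\frac{p-1}{k},1\}\cdot\gamma_\alpha(G)$, and summing over $i$ yields $|D|=\sum_{i=1}^{k}|I_i|\leq k\cdot\max\{\frac{p-1}{k},1\}\cdot\gamma_\alpha(G)=\max\{p-1,k\}\cdot\gamma_\alpha(G)$, as required.

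The step I expect to be the main obstacle is verifying that $D$ really is an $\alpha$-dominating set. By maximality of $I_i$ inside $G-(I_1\cup\cdots\cup I_{i-1})$, every $v\notin D$ has a neighbour in each $I_i$, and since the $I_i$ are pairwise disjoint this exhibits at least $k=\lceil\alpha\delta_G\rceil$ distinct neighbours of $v$ in $D$, so that $|N(v)\cap D|\geq\alpha\delta_G$. The subtlety is that the definition of $\alpha$-domination asks for $|N(v)\cap D|\geq\alpha d_v$, and $d_v$ may exceed the minimum degree $\delta_G$; the construction by itself only certifies $\lceil\alpha\delta_G\rceil$-domination of $D$. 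I would therefore isolate the inequality $\lceil\alpha\delta_G\rceil\geq\alpha d_v$ for each $v\notin D$ as the crux: it holds trivially when $G$ is regular, and in general it is exactly the point at which the operative degree hypotheses must be invoked. This is where the real content of the argument sits, the remaining two claims being routine adaptations of Theorem~\ref{thmalg1}.
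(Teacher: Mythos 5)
Your proposal follows the paper's intended route --- the paper's own proof is literally one line (``same as Theorem~\ref{thmalg1}, just use Lemma~\ref{lemlem3} instead of Lemma~\ref{lemlem2}'') --- and the two routine parts, polynomial running time and the counting bound $|D|=\sum_{i=1}^{k}|I_i|\leq k\cdot\max\left\{\frac{p-1}{k},1\right\}\gamma_{\alpha}(G)=\max\{p-1,k\}\cdot\gamma_{\alpha}(G)$ via Lemma~\ref{lemlem3}, are correct and match the paper. The genuine gap is exactly the one you flag and then leave open: you never establish that $D$ is an $\alpha$-dominating set, only that it is a $\lceil\alpha\delta_G\rceil$-dominating set. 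This cannot be repaired by ``invoking the operative degree hypotheses,'' because the theorem carries none: for a vertex $v\notin D$ with $d_v>\delta_G$ the needed inequality $|N(v)\cap D|\geq\alpha d_v$ can simply fail. Concretely, let $G$ be the claw-free graph ($p=3$) obtained from $K_5$ on $\{v_1,\ldots,v_5\}$ by attaching a path $v_5u_1u_2$, and take $\alpha=1/2$. Then $\delta_G=1$, so $k=\lceil\alpha\delta_G\rceil=1$ and the algorithm returns a single maximal independent set, e.g.\ $D=I_1=\{v_1,u_1\}$; the vertex $v_2\notin D$ has $d_{v_2}=4$ but only one neighbour in $D$, while $\alpha$-domination demands at least $2$. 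So the output need not be an $\alpha$-dominating set and the statement is false for non-regular graphs.

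In other words, you have correctly located the crux, but neither you nor the paper closes it: the paper's proof silently makes the unjustified inference from $\lceil\alpha\delta_G\rceil$-domination to $\alpha$-domination. Note also that the obvious fix --- running the loop $\lceil\alpha\Delta_G\rceil$ times, which does yield an $\alpha$-dominating set since then $|N(v)\cap D|\geq\lceil\alpha\Delta_G\rceil\geq\alpha d_v$ --- breaks the other half of the argument, because Lemma~\ref{lemlem3} only lower-bounds $\gamma_{\alpha}(G)$ by $\gamma_{\lceil\alpha\delta_G\rceil}(G)$ (an $\alpha$-dominating set need not be $\lceil\alpha\Delta_G\rceil$-dominating), so the ratio would degrade. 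The theorem as stated should be restricted to regular graphs (or otherwise amended); your proposal is incomplete at precisely the step where the result itself fails.
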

\begin{proof}
Proof is similar to  {the} proof of Theorem \ref{thmalg1},  {just use Lemma} \ref{lemlem3}  {instead of}  Lemma~\ref{lemlem2}.
\end{proof}
%Suppose that $G$ is a $p$-claw free graph and $\alpha$ is a constant such that $0<\alpha<1$. If $\Delta_G>\frac{1}{2}e^{k \max\left\{ \frac{p-1}{ \alpha \delta_G},1\right\}-1}$,  then clearly $ k \max\left\{ \frac{p-1}{\alpha \delta_G},1\right\}<\ln (2\Delta_G)+1$. So, in this case,  the approximation ratio $k\cdot \max\left\{ \frac{p-1}{\alpha \delta_G},1\right\}$ is better than the approximation  {ratio given by} Theorem \ref{thm1kapx}.

\section{NP-completeness  result}
{We recall the definition of {an} $f$-dominating set {and of the $f$-domination number}. Given a function  $f:\mathbb{N}\rightarrow \mathbb{R}$, where $\mathbb{N}=\{1, 2, 3, \ldots\}$,  a set $D\subseteq V$ is called an  $f$-dominating set in $G$ if for every vertex $v$ of $G$ outside $D$, $|N(v)\cap D|\geq f(d_v)$. The cardinality of a minimum $f$-dominating set in a graph $G$ is called the $f$-domination number of $G$, and the problem of  finding the  $f$-domination number of a graph is called  the $f$-domination problem. }

In this section,  we   prove that the problem of finding {the} $f$-domination number of a graph is NP-complete, for every given function $f$ with some special properties.  It is well known that the following decision problem, denoted by 3-Regular Domination(3RDM), is NP-complete \cite{garey2002computers}: given a 3-regular graph $G =(V, E)$ {and} a positive integer $k$, does  $G$ have a dominating set $S$ of size at most  $k$?  Now,  consider the following decision problem, denoted by  $f$-Domination ($f$DM): given a graph $G =(V, E)$ without isolated vertices {and a} positive integer $k$, does $G$ have an $f$-dominating set $S$ of size at most  $k$?

 We  show that $f$DM is NP-complete for some special functions. We  extend  the proof of NP-completeness of  $\alpha$-domination problem (see \cite{dunbar2000alpha}). 
\begin{theorem}
Let $f : \mathbb{N} \rightarrow \mathbb{R}$ be a polynomially computable function such that $\exists x,y\in \mathbb{N}$ satisfying $x=\lceil f(y)\rceil<y$ and $x+1= \lceil f(x + 3)\rceil$. Then, the $f$DM problem is NP-complete.
 %{Let $f$ be a (strictly) increasing and   {polynomially computable \mbox{real-valued function}} with domain $\mathbb{N}$ that satisfies}
%\begin{enumerate}[label=\bf \alph*.]
%\item $\forall x\in \mathbb{N}, f(x) > 0.$
%\item $\exists x, y\in \mathbb{N}$ such that $f(x+3)\leq x + 1 < f(x + 2) + 1, f(y)\leq x \leq y$  {and} \mbox{$x < f(y + 1)\leq f(y)+1$.}%\exists x_0>0 \mbox{ such that }\forall x\geq x_0, ~ f_{deg}(x+3)\leq x+1.$
%end{enumerate}
%then, {the  $f$DM problem is  NP-complete.}
\label{thmNPf}
\end{theorem}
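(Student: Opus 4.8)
The plan is to prove membership in NP and then NP-hardness by reduction from \textsc{3RDM}. Membership is immediate: given a candidate $S$, one checks $|S|\le k$ and, for each $v\notin S$, compares $|N(v)\cap S|$ with $f(d_v)$; since $f$ is polynomially computable and $|N(v)\cap S|$ is an integer, this is just the test $|N(v)\cap S|\ge\lceil f(d_v)\rceil$, which runs in polynomial time. For hardness, fix the constants $x,y$ guaranteed by the hypothesis, so that $\lceil f(y)\rceil=x<y$ and $\lceil f(x+3)\rceil=x+1$. Given a $3$-regular graph $G=(V,E)$ with $|V|=n$ and an integer $k$, I would build a graph $G'$ in which \emph{every} vertex has degree either $x+3$ or $y$, so that the only values of $f$ that ever enter the argument are the two we control.

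The construction attaches to each $v\in V$ a private, vertex-disjoint gadget: $x$ enforcer vertices $W_v=\{w_1,\dots,w_x\}$, each joined to $v$, together with $y-1$ leaf vertices $L_v$; every leaf is joined to all $x$ enforcers, and the leaves are joined among themselves by a $(y-x)$-regular graph. Then $\deg(v)=3+x$, each enforcer has degree $1+(y-1)=y$, and each leaf has degree $x+(y-x)=y$, as required; crucially, the entire neighbourhood of every leaf lies inside its own gadget. Set the target value $k'=k+xn$. For completeness, if $S$ dominates $G$ with $|S|\le k$, then $D'=S\cup\bigcup_v W_v$ is an $f$-dominating set of $G'$: each leaf sees its $x$ enforcers in $D'$; each $v\notin S$ has degree $x+3$, already sees its $x$ enforcers in $D'$, and needs exactly one further neighbour in $D'$, which it gets from $S$ since $S$ dominates; and $|D'|=|S|+xn\le k'$.

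For soundness I would use a counting squeeze. First, $D_G:=D'\cap V$ already dominates $G$: if $v\notin D'$ then $v$ needs $x+1$ of its $x+3$ neighbours in $D'$, at most $x$ of which are enforcers, so at least one original neighbour lies in $D'$. Second, each gadget must absorb at least $x$ vertices of $D'$: if $|D'\cap(W_v\cup L_v)|\le x-1$, then every leaf, whose neighbours all lie in $W_v\cup L_v$, would require $x$ chosen neighbours among at most $x-1$ selected gadget vertices, hence each leaf must itself lie in $D'$, forcing all $y-1>x-1$ leaves into $D'$, a contradiction. As the gadgets are vertex-disjoint, $|D'\cap\bigcup_v(W_v\cup L_v)|\ge xn$, so $|D_G|\le|D'|-xn\le k$. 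This yields the equivalence: $G$ has a dominating set of size $\le k$ iff $G'$ has an $f$-dominating set of size $\le k'$, completing the reduction.

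The heart of the argument is the degree bookkeeping forced by the hypothesis. Because $f$ is otherwise arbitrary, pendant-type forcing is unavailable (the value $\lceil f(1)\rceil$ is unknown), so every gadget vertex must be pinned to one of the two controlled degrees $y$ and $x+3$, and the leaves must be wired so their neighbourhoods never leave the gadget; this is precisely what validates the counting squeeze, and it is the step I expect to be most delicate. The remaining issues are routine: the leaf filler needs a $(y-x)$-regular graph on $y-1$ vertices, which requires $x\ge 2$ and a parity condition, so the small cases $x\in\{0,1\}$ and parity exceptions should be handled separately (for instance, $x=0$ forces $\lceil f(3)\rceil=1$, so $f$-domination on $G$ already coincides with ordinary domination and the reduction is trivial), if necessary by adjoining a constant number of extra degree-$y$ vertices to each gadget.
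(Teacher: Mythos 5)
Your overall strategy is sound and genuinely different from the paper's. The paper attaches one global clique $K_{y+1}$ and joins an $x$-subset $U$ of it to \emph{every} vertex of $G$; the vertices of $U$ then have uncontrolled degree $y+n$, which is harmless in the completeness direction but forces the soundness direction through a case analysis on whether $W-U\subseteq D$, together with an exchange argument swapping a vertex of $W-U$ for one of $U$. Your per-vertex gadgets keep every auxiliary degree pinned to $y$ and every leaf neighbourhood inside its own gadget, so soundness collapses to a disjoint-counting argument with no cases. That is an attractive simplification, and your completeness, domination-extraction, and counting-squeeze steps are all correct \emph{provided the gadget exists}. That proviso is where the proof has a real hole, and it is larger than your closing paragraph suggests.

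Concretely: your gadget needs a $(y-x)$-regular graph on $y-1$ vertices, which requires $x\ge 2$ and $(y-x)(y-1)$ even; but more fundamentally, \emph{any} private gadget on $m$ vertices, all of degree $y$ in the final graph and meeting $v$ in exactly $x$ edges, must satisfy the handshake condition that $ym-x$ is even (it equals twice the number of internal edges). When $y$ is even and $x$ is odd this fails for every $m$, and it still fails if you mix in gadget vertices of the other controlled degree $x+3$ (which is then also even). So ``adjoining a constant number of extra degree-$y$ vertices'' provably cannot repair these cases. This is not a fringe exception: the pairs $(x,y)=(1,2)$ and $(x,y)=(3,4)$, arising from the paper's own example functions $f(t)=t/2$ and $f(t)=\sqrt{t}+1$, both have $x$ odd and $y$ even. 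The obstruction is escapable --- e.g., since $G$ is $3$-regular, $n$ is even, so you could pair up the vertices of $G$ and let each pair share one gadget sending $x$ edges to each endpoint (making the boundary degree $2x$, hence even), re-proving both directions for the shared gadget --- but that is a genuinely new construction you have not supplied. As written, the reduction does not exist for a family of $(x,y)$ that the theorem must cover, so you need either the general (possibly shared) gadget worked out, or an explicit separate treatment of the parity-bad and $x=1$ cases.
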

\begin{proof}
Since $f$ is a polynomially computable function, the membership in NP is trivial. Now, we prove the NP-hardness.

We make a transformation from 3RDM to $f$DM. We assume that $f$ is a fixed function that satisfies {the conditions} of the theorem. {Consider  a pair $x,y$ satisfying  the conditions  of the theorem}. 
% Suppose that   $x$ is the smallest positive integer satisfying the above condition the smallest  integer such that $(x+1)\geq f_{deg}(x +3)$. Now suppose that $y_x$ be the largest integer with $y_x>x$ and $x\geq f_{deg}(y_x)$.
 Also, suppose that $K_{y+1}$ is a complete graph on $y +1$ vertices. We denote the vertex set of $K_{y+1}$ by $W$ and let  $U$ be  {a subset of $W$ with $x$ elements}.

Now assume that  $G$ is  a 3-regular graph. We transform graph $G$ to a graph denoted by $\hat{G}$ as follows:  join each vertex of   $U$ to all vertices of $G$ (see Figure \ref{Gstar}). It is easy to see that since  $f, x$ and $y$ are fixed,  the transformation can
be done in polynomial time. Now we prove that $G$ has a dominating set $S$ of size of at most  $k$ if and only if $\hat{G}$ has an $f$-dominating set $D$ of size of at most $x +k$.

First, we assume that $S$ is a dominating set in $G$ such that  $|S|\leq k$. Consider  {the} set $D = S \cup U$. Using the  conditions of the theorem, it is easy to see that $D$ is an $f$-dominating set in  $\hat{G}$ with $|D|\leq x+k$.

Now assume that there  is an $f$-dominating set $D$ in  $\hat{G}$ with at most $x+k$ elements. %Between all $f_{deg}$-dominating sets in $\hat{G}$ with at most $x+k$ elements,  let $D$ be the one that has  maximum number of elements of  $U$. Also, without loss of generality, we can suppose that  there is  a vertex in $W-U$ that is outside  $D$.
 In the following, we  show that there is a dominating set $S$ in $G$ of size at most $k$. We consider two cases: $W-U\nsubseteq D$ and  $W-U\subseteq D$. 
 
First, suppose that $W-U\nsubseteq D$. We choose  $S:=D \cap V(G)$.  
We prove that  $S$ is  a dominating set in $G$ with $|S|\leq k$.
First, we prove that  $S$ is a dominating set in $G$.  Suppose that  $v\in V(G)-D$. {Since $D$ is an $f$-dominating set for $\hat{G}$ and the degree of vertex $v$ is $x +3$, $|N(v) \cap D|\geq f(x+3)$.  Hence, by the condition $x+1=\lceil f(x+3)\rceil$, we have  $|N(v) \cap D|\geq x + 1$, which implies that   $v$ has a neighbor in $S$.  As a result,  $S$ is a dominating set in $G$.}

Now, we prove that $|S|\leq k$.  Since $W-U\nsubseteq D$, then there is a vertex $u$ such that  $u\in W-U$ and $u\not\in D$. Since $D$ is an $f$-dominating set for $\hat{G}$, we have $|N(u)\cap D|\geq f(y)$. Hence, by the assumption $x=\lceil f(y)\rceil$, $|N(u)\cap D|\geq x$, and therefore   $|D \cap W|\geq x$, which means that  $|S|\leq k$. 

Now, suppose that $W-U\subseteq D$. We claim that using the set $D$, we can construct an $f$-dominating set $D'$ for $\hat{G}$ such that $|D'|\leq x+k$ and $W-U\nsubseteq D'$.  Hence, by the similar reasons for  the first case ($W-U\nsubseteq D$),  this will imply that $S:=D'\cap V(G)$ is a dominating set in $G$ with $|S|\leq k$.

It remains to prove the claim. We consider two cases: $W\subseteq D$ and $W\nsubseteq D$.  If $W\subseteq D$, then let $p$ be an arbitrary vertex in $W-U$. Note that by the  assumption $x<y$, there always exists such a vertex $p$. Now, suppose that $D':=D\backslash\{p\}$. It is obvious that $|D'|<|D|\leq x+k$. Now, since $p$ is not adjacent to any  vertex in  $G$ and also since $K_{y+1}$ is a complete graph, it is not hard to see that   $D'$ is an $f$-dominating set in $\hat{G}$.

Now, suppose that $W\nsubseteq D$. Since we assumed that $W-U\subseteq D$, there exists a vertex $q\in U-D$. Let $p$ be an arbitrary vertex in $W-U$. We choose  $D':=\left(D\backslash\{p\}\right)\cup \{q\}$. Obviously, $|D'|=|D|\leq x+k$.   Again, since $p$ is not connected to any  vertex in  $G$ and also since $K_{y+1}$ is a complete graph, clearly  $D'$ is an $f$-dominating set in $\hat{G}$.

Because 3RDM  is  NP-complete \cite{garey2002computers}, $f$DM is also NP-complete for the function $f$, which satisfies the conditions of Theorem \ref{thmNPf}.
\end{proof}
\begin{figure}[t]
\begin{center}
	\includegraphics[scale=1]{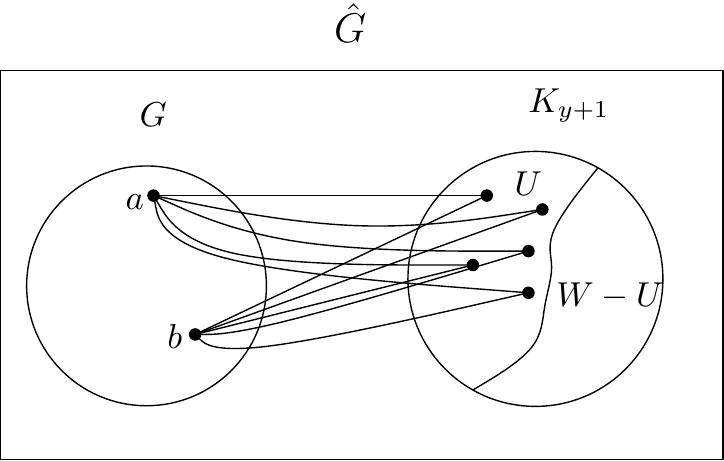}
	\caption{Construction of graph $\hat{G}$: each vertex of $U$ is joined to all  {vertices} of $G$.}
   \label{Gstar}
	\end{center}
\end{figure}
There are many functions that satisfy the conditions of Theorem \ref{thmNPf}, such as $\frac{x}{2}$, $\sqrt{x}+1$ and  $2\ln\left(1+ \frac{x}{2}\right)$. In Table \ref{tabel1}, we present concrete examples of  pairs $(x,y)$  satisfying the conditions of Theorem \ref{thmNPf}  for these three functions. 
\begin{table}
\begin{center}
\begin{tabular}{ |c|c| }
\hline
Function&$(x,y)$\\
  \hline			
  $\frac{x}{2}$ & $(1,2)$  \\[1ex]
  \hline
   $\sqrt{x}+1$& $(3,4)$ \\[1ex]
   \hline
  $2\ln\left(1+ \frac{x}{2}\right)$ & $(2,3)$  \\[1ex]
  \hline  
\end{tabular}
\end{center}
\caption{Pairs $(x,y)$ satisfying the conditions of Theorem \ref{thmNPf}.}
\label{tabel1}
\end{table}
%%%%%%%%%%%%%%%%%%%%%%%%%%%%
\section{Concluding remarks}
In this paper, we introduced the concept of $f$-domination as a  generalization of $\alpha$-domination. Furthermore,  we presented some approximability and inapproximability results on  the \mbox{problems} of finding  the $k$-domination number and the  $\alpha$-domination number for some  classes of graphs.  Furthermore, we proved the NP-completeness of $f$-domination problem {under mild assumptions on function $f$}.  It is remarkable that  the family of $f$DM problems can be seen as a generalization of the {\mbox{$\alpha$-domination}} problems and as a special case of the vector domination problem.  The vector domination problem is defined as follows.  Given a graph $G= (V, E)$ with  $n$ vertice and an $n$-dimensional  non-negative vector $\left(k_v; v\in V\right)$ such that for all $v\in V$, $k_v\in\{0,1,\ldots,d_v\}$; the  {\it vector domination}   is the problem of finding a minimum $S \subseteq V$ such that every vertex  $v\in V\backslash S$ has at least $k_v$ neighbors in $S$. We refer the reader for reading about the vector domination to \cite{cicalese2013approximability,ishi2016}. The connection between $f$-domination and vector domination problems implies that for every polynomially computable function $f$ and every class of graphs  in which the vector domination problem is polynomially solvable, the $f$DM problem is also polynomially solvable.

Finally, we leave open the following problem. 

Could we identify some  interesting classes of graphs where the $f$DM problem can be efficiently solved for many choices of function $f$?
\section*{Acknowledgments}
The authors would like to thank the reviewer for his/her very insightful comments that {improved} the paper, both in structure and in the results, especially for pointing out the reference \cite{Dinur2014} to the authors that improves the results in {Theorems} \ref{thmepdom}, \ref{kdomeps}, \ref{thmalpha}, \ref{thm8}, \ref{thm9}, \ref{thm10}, Corollary \ref{col1} and Lemma \ref{lemlem}.
% BibTeX users please use one of

\end{document}